\newtheorem{lemma}{Lemma}
\begin{document}

\title{Dilute measurement-induced cooling into many-body ground states}

\author{Josias Langbehn}
\affiliation{
Dahlem Center for Complex Quantum Systems and Fachbereich Physik, Freie Universit\"at Berlin, Arnimallee 14, 14195 Berlin, Germany 
}
\author{Kyrylo Snizhko}
\affiliation{Univ. Grenoble Alpes, CEA, Grenoble INP, IRIG, PHELIQS, 38000 Grenoble, France}
\author{Igor Gornyi}
\affiliation{Institute for Quantum Materials and Technologies and Institut für Theorie der Kondensierten Materie, Karlsruhe Institute of Technology, Karlsruhe 76131, Germany}
\author{Giovanna Morigi}
\affiliation{Theoretical Physics, Department of Physics, Saarland University, 66123 Saarbrücken, Germany}
\author{Yuval Gefen}
\affiliation{Department of Condensed Matter Physics, Weizmann Institute of Science, Rehovot 7610001, Israel}
\author{Christiane P. Koch}
\email{christiane.koch@fu-berlin.de}
\affiliation{
Dahlem Center for Complex Quantum Systems and Fachbereich Physik, Freie Universit\"at Berlin, Arnimallee 14, 14195 Berlin, Germany 
}

\date{\today}
\begin{abstract}
Cooling a quantum system to its ground state is important for the characterization of non-trivial interacting systems, and in the context of a variety of quantum information platforms. In principle, this can be achieved by employing  measurement-based passive steering protocols, where the steering steps are predetermined and are not based on measurement readouts. However, measurements, i.e., coupling the system to auxiliary quantum degrees of freedom, is rather costly, and protocols in which the number of measurements scales with system size will have limited practical applicability. Here, we identify conditions under which measurement-based cooling protocols can be taken to the dilute limit. For two examples of frustration-free one-dimensional spin chains, we show that steering on a single link is sufficient to cool these systems into their unique ground states. We corroborate our analytical arguments with finite-size numerical simulations and discuss further applications.
\end{abstract}

\maketitle

\section{Introduction}

The ability to control many-body quantum systems is a prerequisite for implementing quantum information processing, but also, more generally, for advances across quantum physics from atomic and molecular all the way to condensed matter physics. Particular challenges are to identify control strategies that are both scalable with increasing system size and robust to parameter fluctuations. The challenge of robustness can be addressed by so-called quantum reservoir engineering~\cite{PoyatosPRL96} where coupling to a reservoir induces desired dissipation. In the long-time limit, the system can then be driven into a predesigned target state, irrespective of the initial state. For many-body quantum systems, this allows for preparing pure and possibly highly entangled states or driving the system into non-trivial quantum phases~\cite{DiehlNatPhys08, kraus_preparation_2008, verstraete_quantum_2009, DiehlNatPhys2011, mi2023,Brown_etal-Kamal:2022}.

When seeking to exploit engineered dissipation as a resource for, e.g., quantum  computation~\cite{kraus_preparation_2008, verstraete_quantum_2009}, one is often faced with the problem that the required couplings with the reservoir are difficult or even impossible to design. In other words, natural dissipation processes may simply not allow for the system-reservoir interactions needed for a given target state~\cite{kraus_preparation_2008}. An alternative is to leverage quantum measurements to induce the desired dissipative dynamics~\cite{HarringtonNatRevPhys2022}. In quantum measurements, the quantum system of interest is coupled to a ``meter'' (``detector'') for a time during which the system and meter become entangled (at least weakly), at which point the interaction is switched off~\cite{wisemanBook}. A textbook example is given by a quantized electromagnetic field mode that interacts with Rydberg atoms as ``meters''~\cite{HarocheBook}. Depending on the initial state of the atoms, the quantum system can be cooled to zero temperature~\cite{HarocheBook} or prepared in an entangled state~\cite{Pielawa:2007}. The concept of engineered environment has found a wide range of applications, for instance, for cooling a spin chain to the ground state by using a qubit as a meter whose frequency is tuned to the lowest energy gap~\cite{raghunandan2020}. It is also central to protocols based on homogenization of quantum systems \cite{BurgarthPRL2007, BurgarthPRA2007} and to ``algorithmic'' or digital cooling in quantum simulators~\cite{Boykin2002, Metcalf2020, Zaletel2021}.

For a single quantum degree of freedom interacting with one or more meters, such as the electromagnetic field mode interacting with a beam of atoms~\cite{HarocheBook}, any desired state can be prepared with suitably optimized classical drives~\cite{RojanPRA2014}. 
As another example, combining measurement-based dynamics with unitary evolution, it is possible to steer a spin-1/2 system to any desired state on or inside the Bloch sphere~\cite{roy_measurement-induced_2020, KumarPRR2020}.
Such a high level of control cannot be expected for many-body quantum systems where the multipartite nature introduces competing timescales and restrictions on the reachable states. For example, only certain classes of states will be attainable with quasi-local couplings~\cite{Perez-GarciaQIC2008, JohnsonQIP2016} and systems with frustrated local Hamiltonians are not amenable to passive steering~\cite{ticozzi_steady-state_2014}. 

Quantum measurements can nevertheless be exploited in multiple ways, which include, among many others, inducing phase transitions~\cite{FisherAnnuRev2023, Noel2022a, Koh2022, hoke2023}, realizing cooling with quantum hardware~\cite{PollaPRA2021, FengPRA2022, matthies2023, kishony2023, maurya2023}, protect quantum states and dynamics~\cite{Biella2021,Albash_Lidar18,Menu_etal22}, modifying the entanglement structure~\cite{paviglianiti2023}, or steering a quantum system from an arbitrary initial state toward a chosen target state~\cite{roy_measurement-induced_2020, KumarPRR2020, HerasymenkoPRXQ2023}. 

The Affleck-Lieb-Kennedy-Tasaki (AKLT) model~\cite{AKLT1,AKLT2} has served as a major paradigm for both quantum reservoir engineering~\cite{kraus_preparation_2008} and measurement-induced dynamics, including state engineering~\cite{roy_measurement-induced_2020,MurtaPRR2023, SmithPRXQ2023, chen2023highfidelity,chen2023efficient, puente2023} in many-body quantum systems; 
similar protocols have also been brought forward for the two-dimensional Kitaev model, a spin liquid~\cite{SriramPRB2023, LavasaniPRB2023}.
In this respect, the AKLT model and its relatives could be considered as very instrumental tools that bridge the field of measurement-induced state preparation with that of strongly correlated systems.

As much as versatility and robustness make measurement-based control appealing, for many-body quantum systems, this comes at a cost. The first distinction is based on whether or not the measurement outcomes are used for control~\cite{wisemanBook}. In the context of steering, this is referred to as passive~\cite{roy_measurement-induced_2020} vs active~\cite{HerasymenkoPRXQ2023} steering, with passive steering including measurement-induced cooling~\cite{roy_measurement-induced_2020, PollaPRA2021, puente2023}. While active steering may leverage decision-making policy~\cite{HerasymenkoPRXQ2023, morales2023}
or optimization~\cite{sayrin2011} to facilitate the approach of the target state, the actual implementation of the feedback requires additional resources, on top of the auxiliary quantum systems serving as meters. For many-body systems, this will quickly become challenging, if not unfeasible. In fact, already for passive steering protocols, one may wonder how far these can be pushed given their requirement of the number of meters that is extensive in system size, cf. Fig.~\ref{fig:schematic_cooling}(a).

\begin{figure}[tbp]
\includegraphics[width=1\columnwidth]{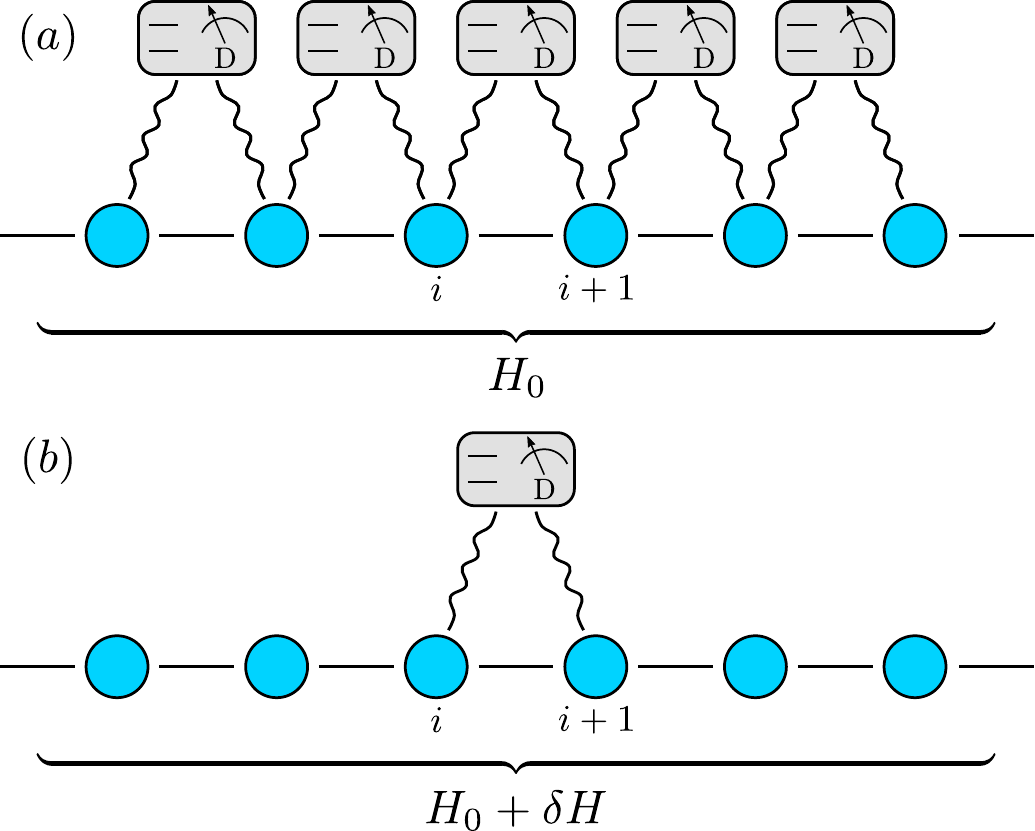}%
\caption{Measurement-induced cooling of a many-body system described by Hamiltonian $H_0$ (shown as one-dimensional for simplicity). (a) Setup as discussed in Ref.~\cite{roy_measurement-induced_2020} with an extensive number of detectors $D$, each one coupled to a pair of sites. (b) Dilute cooling in the extreme limit of a single detector $D$: The interplay between the local detector-system coupling and possibly additional coherent interactions $\delta H$ allows for driving the system into the ground state of $H_0$.}  \label{fig:schematic_cooling}
\end{figure}

Here, we adopt the perspective that detectors are a precious resource whereas the ability to engineer (quasi-local) interactions within a many-body system can be taken as given, since this is a prerequisite to implement the many-body model itself. The detectors are key to realizing measurement-induced dynamics, but how many are actually needed? For the paradigmatic AKLT model, a chain of spin-1 particles, we show that measuring on a single pair of two neighboring sites is sufficient to cool the entire chain into its ground state, cf. Fig.~\ref{fig:schematic_cooling}(b). The reason is that, for sufficiently large chains, the nearest-neighbor couplings allow for propagating any excitation to the measured link where the excitation is dissipated. While the structure of the AKLT Hamiltonian is such that it ensures the propagation for all conceivable excitations, in general, this is, of course, not the case. We, therefore, identify which quasi-local Hamiltonians should be added to engineer the necessary population flow and exemplify our theoretical framework for dilute cooling of a many-body system with a second example, the Majumdar-Ghosh model~\cite{MajumdarGhoshOriginal}. Finally, we discuss the hierarchy of timescales for the excitation propagation and localized measurements, in order to identify the scaling of the cooling time with the system size, as well as time dependence of cooling in the limit of infinite systems.

The remainder of the paper is organized as follows. Section~\ref{sec:measuremen_induced_cooling} presents the overall concept of measurement-induced cooling, starting with a brief recap of passive steeering~\cite{roy_measurement-induced_2020} and its description in terms of a master equation in Sec.~\ref{subsec:recap}. In Sec.~\ref{subsec:Lchoice}, we introduce the notions of local hot and cold subspaces on the link(s) that shall be cooled. This gives rise to conditions on the system-detector couplings to be permissible for a given target state. Another key ingredient for many-body cooling is the cooling rate and we explain how to estimate it as a function of the system size in Sec.~\ref{subsec:rate}. 
In Sec.~\eqref{sec:realization} we analyze numerically two distinct models defined on finite chains, and demonstrate that cooling a single link is sufficient to drive into the respective many-body ground state.For the AKLT model, this is possible for sufficiently large chains, as we discuss in Sec.~\ref{subsec:aklt}, while we show in  Sec.~\ref{subsec:majumdar_ghosh}, using the Majumdar-Ghosh model, that dilute cooling can also be used to selectively drive into one state within a  degenerate ground state manifold for sufficiently large chains. Dilute cooling is not limited to these two examples and we formulate a general framework in Sec.~\ref{sec:cooling_conditions}.
In Sec.~\ref{subsec:Hchoice} we formalize the condition to preserve the target for all conceivable interactions within the system. This, together with the corresponding condition on the system-detector couplings, allows us to state necessary and sufficient conditions for dilute cooling in  Sec.~\ref{subsec:necessarycond} and \ref{subsec:suffcond}, respectively. 
With those conditions at hand, we show how to prepare the AKLT state in small chains in Sec.~\ref{subsec:smallAKLT}. We summarize the general framework for dilute measurement-induced cooling in the form of a "recipe" in Sec.~\ref{sec:disc} before concluding in Sec.~\ref{sec:concl}.

\section{Measurement-induced cooling \label{sec:measuremen_induced_cooling}}

We first briefly review the protocol of Ref.~\cite{roy_measurement-induced_2020} for steering the state of a quantum system from an arbitrary initial state toward a chosen target state by coupling it to an extensive number of auxiliary quantum degrees of freedom ("detectors") in Sec.~\ref{subsec:recap}. Then we lay the foundations for dilute cooling by introducing the notions of cold and hot subspaces on the cooled sites in Sec.~\ref{subsec:Lchoice}. When reducing the number of detectors, a key question is how fast the target state is approached and we discuss  in Sec.~\ref{subsec:rate} how to estimate the cooling rate (or time) to reach the target state. 

\subsection{Recap of passive steering}\label{subsec:recap}

For simplicity, the auxiliary quantum degrees of freedom are taken to be qubits. 
The total Hamiltonian for the composition of system and detector is written as 
\begin{equation}
    \label{eq:Htotal}
H=H_{\text{s}}\otimes \mathbb{I}_d + H_\text{s-d} + \mathbb{I}_s\otimes H_{\text{d}}
\end{equation}
where $H_s$, $H_{s-d}$, and $H_d$ act respectively on the system only, on the system and detector, and on the detector only and $\mathbb{I}_s$ ($\mathbb{I}_d$) is the identity operator on the system (detector) Hilbert space. In the following, we drop the explicit reference to the identity operators and work in the interaction picture with respect to $H_\text{s} + H_\text{d}$. The protocol consists of repeatedly performing the following steps~\cite{roy_measurement-induced_2020}:
\begin{enumerate}
\item Each detector qubit is prepared in a fixed pure state, independent of the system state. Detector and system are in the separable state $\rho_{\text{d}}\otimes\rho_{\text{s}}(t)$, with $\rho_{\text{d/s}}$ the state of the detector qubits, respectively system, alone.
\item System and detector qubits are coupled during a time interval $\delta t$. The time evolution generated by the corresponding interaction $H_{\text{s-d}}$ is given by
  \begin{equation}
    \rho_{\text{s-d}}\left(t+\delta t\right)= e^{-iH_{\text{s-d}}\delta t}\rho_{\text{d}}\otimes\rho_{\text{s}}\left(t\right)e^{iH_{\text{s-d}}\delta t}\label{eq:infenitesimal_increment}\,, 
  \end{equation}
  where $\rho_{\text{s-d}}$ denotes the joint state of the system and detector qubits. 
\item The interaction is switched off and the detector qubits are discarded.
  This is equivalent to a projective measurement of the detector qubits, with an unbiased average over all measurement outcomes. The system state is then obtained by tracing out the detector qubits,
  \begin{equation}
\rho_{s}\left(t+\delta t\right) =\text{Tr}_{d}\rho_{\text{s-d}}\left(t+\delta t\right)    
  \end{equation}
\end{enumerate}

The dynamics of the system evolving under this protocol is obtained as follows~\cite{roy_measurement-induced_2020}. For simplicity, we discuss the derivation of the equation of motion for a single detector qubit; the extension to several detectors is straightforward.
The choice of the initial state of the detector qubit, $\ket{\phi_0}$, induces a partition on the detector Hilbert space $\mathcal{H}_{d}$ into two orthogonal subspaces, $\mathcal{H}_{\text{d}}=\mathcal{D}_{0}\oplus\mathcal{D}_{1}$, with $\mathcal{D}_{0}$ spanned by $\ket{\phi_{0}}$ and $\mathcal{D}_{1}=\mathcal{D}_{0}^{\perp}$ spanned by $\ket{\phi_1}$ such that $\braket{\phi_0}{\phi_1}=0$. 
Likewise, the composite system-detector Hilbert space $\mathcal{H}=\mathcal{H}_{s}\otimes\mathcal{H}_{\text{d}}$ is partitioned into subspaces
\begin{equation}
    \mathcal{H}_{i}=\mathcal{H}_{\text{s}}\otimes\mathcal{D}_{i}
\end{equation}
for $i=0,1$ such that
\begin{equation}
    \mathcal{H}=\mathcal{H}_{0}\oplus\mathcal{H}_{1}\,.
\end{equation}
The system-detector interaction can be represented as~\footnote{In general, this may also include diagonal terms, leading to a Lamb shift in the system Hamiltonian $H_{\text{s}}$.}
\begin{align}
H_{\text{s-d}}= & \begin{pmatrix}0 & \sqrt{\gamma} \tilde{L}^{\dagger}\\
\sqrt{\gamma} \tilde{L} & 0
\end{pmatrix}\,,
\end{align}
where $\gamma$ parameterizes the coupling strength between system and detector and is also referred to as the \emph{dissipation rate}. The non-Hermitian operators $\tilde{L}, \tilde{L}^\dagger$ act on the system whenever the detector state changes from $\ket{\phi_0}$ to $\ket{\phi_1}$ or vice versa.
For sufficiently short intervals $\delta t$, Eq.~\eqref{eq:infenitesimal_increment} can be expanded to second order in $\delta t$. Tracing out the detector qubit results in 
\begin{equation}
\frac{\rho_{\text{s}}\left(t+\delta t\right)-\rho_{\text{s}}\left(t\right)}{\delta t}= \gamma \delta t \left(\tilde{L}\rho_{\text{s}}\left(t\right)\tilde{L}^{\dagger}-\frac{1}{2}\left\{ \tilde{L}^{\dagger}\tilde{L},\rho_{\text{s}}\left(t\right)\right\} \right)    \,.
\end{equation}
Taking the limit of continuous measurements ($\delta t\to 0$) while keeping $L=\tilde{L}\sqrt{\delta t}$ constant yields an equation of motion for the reduced state of the system which is of Gorini-Kossakowski-Sudarshan-Lindblad (GKSL) form \cite{Breuer_Petruccione07,Rivas_Huelga}. The generalization to multiple detector qubits is straightforward. In the Schr{\"o}dinger picture, it results in
\begin{widetext}
  \begin{equation}
    \frac{d}{dt}\rho_{\text{s}}\left(t\right)
    =  \mathcal{L}\left(\rho_{\text{s}}\left(t\right)\right)=
    -i\left[H_{\text{s}},\rho_{\text{s}}\left(t\right)\right]+\gamma\sum_{i}\left(L_{i}\rho_{\text{s}}\left(t\right)L_{i}^{\dagger}-\frac{1}{2}\left\{ L_{i}^{\dagger}L_{i},\rho_{\text{s}}\left(t\right)\right\} \right)\label{eq:master_eq}
\,.
  \end{equation}
\end{widetext}

The idea of passive steering is to choose the operators $L_i$ such that repeated system-detector interactions in the long-time limit drive the system into the target state $\rho_{\oplus}$, which is often (but not necessarily) the ground state of the system Hamiltonian $H_{\text{s}}$. This is guaranteed if $\rho_{\oplus}$ is the unique steady state of $\mathcal{L}$ since then any initial state of the system is driven towards $\rho_{\oplus}$~\footnote{We assume here that $\mathcal L$ has no purely imaginary eigenvalues~\cite{yoshida2023SteadyStateUniqueness}}. Ideally, the target state is pure, $\rho_{\oplus} = \ketbra{\psi_\oplus}$. Passive steering is based on a predefined set of system-detector couplings, which, in contrast to active steering, is not modified in the course of the protocol based on the detector readouts. When the measurement outcomes are completely discarded (even for the termination of steering), such a passive protocol is sometimes also referred to as blind steering. In the context of measurement-based cooling, such protocols can be dubbed ``blind cooling''; this term will be utilized throughout the paper. 

\subsection{Choice of jump operators}\label{subsec:Lchoice}

In order to understand how to engineer the system-detector interactions, we determine which jump operators are permissible by asking that they leave the target state $\ket{\psi_\oplus}$ invariant. For simplicity, we will focus on one-dimensional lattices (``chains''), where the geometrical locality of the jump operators is established by involving neighborhoods of two adjacent sites which we call a \emph{link} $\left(i, i+1 \right )$. Our considerations are, however, not restricted to one-dimensional systems, and the generalization to larger neighborhoods is straightforward. The target state on a given link $\left(i, i+1 \right )$ is obtained by tracing over all other sites,
\begin{align}
\rho_{\oplus}^{\left(i,i+1\right)}= & \text{Tr}_{\mathcal{H}_{i}^{\complement}}\left(\rho_{\oplus}\right)\,,
\end{align}
where $\mathcal{H}_{i}^{\complement}\equiv\bigotimes_{j\neq i,i+1}\mathcal{H}_{j}$ denotes the tensor complement to the local Hilbert space of the link, $\mathcal{H}^{\left(i,i+1\right)}$.
We seek to cool (empty) all states on a given link $\left(i,i+1\right)$ that do not pertain to $\rho_{\oplus}^{\left(i,i+1\right)}$. 

As a simple example, consider a qubit with target state $\rho_\oplus=\ketbra{0}$. Since $\ket{1}$ does not pertain to $\rho_\oplus$, cooling via $L=\sigma^-=\ketbra{0}{1}$ would be permissible. We can formalize this intuition via the support of $\rho_\oplus$.
Defining the \emph{support} of a state $\rho_\oplus$ as 
\begin{equation}
    \text{supp}\,\rho = \left ( \ker \rho \right )^\perp,
\end{equation}
the local Hilbert space $\mathcal{H}^{\left(i,i+1\right)}$ of the link is partitioned into a hot and a cold local subspace, 
\begin{subequations}
    \begin{eqnarray}
        \mathcal{H}^{\left(i,i+1\right)} & =&\mathcal V_{\text{cold}}^{\left(i,i+1\right)}\oplus \mathcal V_{\text{hot}}^{\left(i,i+1\right)}\,,\\
        \mathcal V_{\text{cold}}^{\left(i,i+1\right)} & =&\text{supp}\,\rho_{\oplus}^{\left(i,i+1\right)}\,, \label{eq:v_cold_measurements}\\
        \mathcal V_{\text{hot}}^{\left(i,i+1\right)} & =&\left(\mathcal V_{\text{cold}}^{\left(i,i+1\right)}\right)^{\perp}=\ker \rho^{\left(i,i+1\right)}_\oplus \,.  \label{eq:v_hot_measurements}
    \end{eqnarray}
\end{subequations}
We refer to $\mathcal V_{\text{hot}}^{\left(i,i+1\right)}$ as \emph{hot} because it contains all the states which do not pertain to $\rho_{\oplus}^{\left(i,i+1\right)}$ and therefore cooling consists of emptying the hot subspace so that the system's state is in the \emph{cold} local subspace $\mathcal V_{\text{cold}}^{\left(i,i+1\right)}$.

The partitioning of the local Hilbert space of the cooled link into hot and cold subspaces leads to a natural way of choosing the jump operators: For every state $\ket{\phi_{\text{h},j}} \in \mathcal V_{\text{hot}}^{\left(i,i+1\right)}$, we choose some state $\ket{\phi_{\text{c},j}}\in\mathcal V_{\text{cold}}^{\left(i,i+1\right)}$ such that $L_j = \ketbra{\phi_{\text{c},j}}{\phi_{\text{h}, j}}$. This choice ensures that the jump operators $L_j$ do nothing but cool, i.e., they do not transfer population from the cold to the hot subspace or within the hot subspace, and they do not result in pure dephasing, but transfer population into the cold subspace. 

This intuition can be formalized in terms of three conditions. 
\begin{subequations}\label{eq:cond-jump}
First, the operators $L^{\left(i,i+1\right)}_j$ must be  nilpotent, i.e., 
\begin{equation}\label{eq:nilpotent}
    \left(L^{\left(i,i+1\right)}_j\right)^2=0\,.
\end{equation}
This ensures all eigenvalues of $L_j^{\left(i,i+1\right)}$ to be $0$. Otherwise, $L_j^{\left(i,i+1\right)}$ would possess at least one non-zero eigenvalue, implying existence of an invariant state.
The second property guarantees that only states from the cold subspace are not affected by any of the jump operators: 
\begin{equation}
    \cap_j \ker L_j^{\left(i,i+1\right)} = \mathcal V_{\text{cold}}^{\left(i,i+1\right)}\,.
\end{equation}
Moreover, 
\begin{equation}\label{eq:kerLj}
    \sum_j \text{image}\left(L^{\left(i,i+1\right)}_j \right) \subseteq \mathcal V_{\text{cold}}^{\left(i,i+1\right)},
\end{equation}
\end{subequations}
since then the jump operators only map states from the hot subspace to the cold subspace but do not act within the hot (respectively, cold) subspace.

\subsection{Cooling rate}\label{subsec:rate}

Designing the system-detector interaction that drives the system into the desired target state is an important first step. For practical applications of the protocol, the time that is required to attain the target state shall be faster than detrimental effects, that tend to heat up the system. This is also relevant, in particular for many-body systems. A key question is how this characteristic time scales with the system size.

To quantify the approach of the target state, a suitable figure of merit is needed. A natural choice is the Hilbert-Schmidt overlap,
\begin{equation}
    \label{eq:HSoverlap}
    D_{\text{overlap}} = \Tr\left( \rho_{\text{f}}\, \rho_\oplus\right)\,,
\end{equation}
between the obtained state $\rho_{\text{f}}$ and the target state $\rho_\oplus$. It is a suitable figure of merit as long as the target state is pure~\cite{BasilewitschAQT19}. This is the case here, and we will use it throughout. For mixed states, one would need to consider a true distance measure~\cite{BasilewitschAQT19} such as the Hilbert-Schmidt distance, $D_{\text{HS}}=\frac{1}{2}\Tr\left( \left(\rho_{\oplus}-\rho_{\text{f}}\right)^{2}\right)$. Since we are interested in cooling, one could also consider the expectation value of the system energy, relative to the targeted energy, i.e., $E^{rel}_s = \Tr\left( H_\text{s} \rho_\text{f}\right) - \Tr \left(H_\text{s} \rho_\oplus\right)$. 

The rate of cooling is set by the spectral gap of the Liouvillian, defined as $\Delta = \left|\Re \lambda_1\right|$ where $\lambda_1$ is the non-zero eigenvalue of $\mathcal{L}$ with the real part closest to $0$. The gap sets the slowest rate of approach towards $\rho_{0}$, which dominates in the asymptotic limit $t\rightarrow\infty$. The larger the gap, the faster the convergence towards $\rho_{\oplus}$.
If there exist multiple steady states, namely, there are multiple right eigenvectors at eigenvalue 0, then $\Delta=0$. In this case for some initial states the dynamics will not converge to $\rho_{\oplus}$, and blind cooling is not feasible.
Uniqueness of the steady state can be checked in terms of the algebraic properties of the Hamiltonian $H$ and the jump operators $L_i$~\cite{yoshida2023SteadyStateUniqueness}.

Obtaining $\Delta$ from exact diagonalization quickly becomes unfeasible,
since the Liouvillian corresponds to a $d^{2}\times d^{2}$-matrix
for a system Hilbert space dimension $d$. With the separation of
the Hilbert space into cold and hot subspaces introduced above in
Sec.~\ref{subsec:Lchoice}, and under the assumption of a unique
steady state, it is possible to estimate the gap $\Delta$ determining
the cooling rate for sufficiently small dissipation rates $\gamma$ as
\begin{equation}
\Delta\lesssim\Delta_{\text{est}},\label{eq:gap_est}
\end{equation}
with the estimate $\Delta_{\text{est}}$ deriving from the properties
of the Hamiltonian: \begin{subequations}\label{eq:gap_estimate}
\begin{eqnarray}
\Delta_{\text{est}} & \equiv & \frac{1}{2}Q\cdot\gamma\,,\\
Q & \equiv & \min_{\epsilon}q_{\epsilon}\,,\label{eq:q_epsilon}\\
q_{\epsilon} & \equiv & \min_{n}\expval{P_{\text{hot}}}{\epsilon_{n}}\,.
\end{eqnarray}
\end{subequations} Here, $\epsilon$ labels the (possibly degenerate)
eigenenergies of $H$ with the corresponding eigenspace spanned by
$\ket{\epsilon_{n}}$, where $n$ runs over the degenerate eigenstates
The projector $P_{\text{hot}}$ projects onto the ``hot'' subspace
on the link $\mathcal{V}_{\text{hot}}^{\left(1,2\right)}$, cf. Eq.~\eqref{eq:v_hot_measurements},
assuming that this is the only link that is cooled. With these definitions,
$Q$ in Eq.~\eqref{eq:gap_estimate} gives the minimum population
that an excited eigenstate $\ket{\epsilon_{n}}$ can have in the hot
subspace.

The estimate~\eqref{eq:gap_estimate} is justified for sufficiently
small dissipation rate $\gamma$ (cf.~Appendix~\ref{app:gap_estimate}),
\begin{equation}
\gamma\ll N/d^{N}\,,\label{eq:gamma_vs_gap}
\end{equation}
where $N$ is the number of sites and $d$ the Hilbert-space dimension
on a single site, and the Hamiltonian is of the form $H=\sum_{i}O_{i}$,
with the spectrum of operators $O_{i}$ (which can act on site $i$
and neighbouring sites) confined to $[0,1]$, so that $N/d^{N}$ is
the typical level spacing of the Hamiltonian.

In our numerical studies of the examples below, we find irrespective
of $\gamma$ that $\Delta_{\text{est}}$ provides a bound for the
gap,
\begin{equation}
\Delta\leq\Delta_{\text{est}}.\label{eq:gap_bound}
\end{equation}
We have not been able to prove this bound analytically, yet it looks
natural in the light of the following intuition: the cooling rate
is determined by the population of the hot subspace ($Q$) times the
dissipation rate ($\gamma$), yet may actually be smaller since after local
depopulation of the hot subspace excitations need time to propagate
to the cooled link; at the same time coherent superpositions between
the states of different energies evolve in time, as they acquire relative
phases --- this justifies the focus on the eigenstates of the Hamiltonian.

\section{Realization of single link cooling in two frustration-free spin chains}\label{sec:realization}

\subsection{Cooling into the AKLT state for $N\ge 5$ sites }
\label{subsec:aklt}

The ground state of the AKLT model~\cite{AKLT1,AKLT2} is a paradigmatic example of a matrix product state, as well as of a symmetry-protected topological phase~\cite{PollmannPRB2012}. In the context of quantum engineering, it holds promise as a resource state for measurement-based quantum computation~\cite{VerstraetePRA2004}. Measurement-assisted preparation of the AKLT ground state has recently attracted much attention~\cite{roy_measurement-induced_2020, HerasymenkoPRXQ2023, SmithPRXQ2023, chen2023highfidelity, PhysRevA.108.013712}, as an illustration of proof-of-principle protocols for engineering correlated many-body states.
We now show that measurement-induced cooling into the AKLT ground state can be taken to the dilute limit. 

\begin{figure}[tbp]
\includegraphics[width=1\columnwidth]{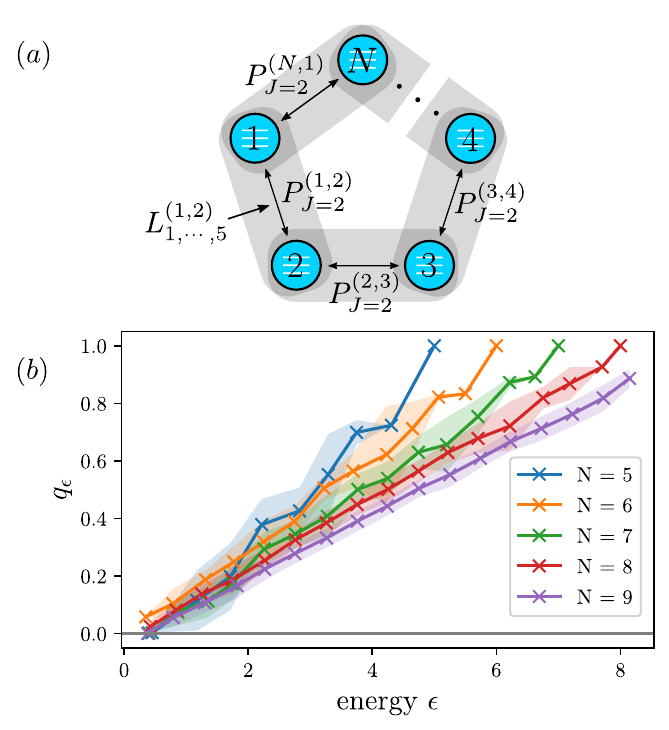}%
\caption{(a) Schematic of the spin-1 AKLT chain with periodic boundary conditions. The projectors onto the $J_i=2$ subspace for each nearest-neighbor link, $P_{J=2}^{(i,i+1)}$,  make up the AKLT Hamiltonian. In its most extreme form, dilute cooling acts on a single link only, via the five cooling operators $L^{(i,i+1)}_{1,\ldots,5}$, cf. Eq.~\eqref{eq:jumpAKLT}, with $i$ taken to be $i=1$. (b) Correlation of the minimal eigenstate projection onto the hot subspace $q_\epsilon$, Eq.~\eqref{eq:q_epsilon}, with energy $\epsilon$ for AKLT chains of increasing size. The solid lines show the mean of $q_\epsilon$ for energy bins of size $\Delta E = 0.5$ while the shaded background gives the standard deviation.  This correlation implies that in general low-lying excitations will decay slower than higher energy ones.}  \label{fig:aklt_model}
\end{figure}

The AKLT model~\cite{AKLT1, AKLT2}, a one-dimensional chain of spin-1 particles with nearest-neighbor interactions, cf. Fig.~\ref{fig:aklt_model}(a), is an example of a frustration-free system with Hamiltonian given by the sum of non-commuting local Hamiltonians.  Periodic boundary conditions (PBC) imply uniqueness of the ground state which is a valence bond state such that on each bond between two neighboring spins, there is no projection on the total spin-2 sector. The ground state of the whole chain is also the ground state of each local term of the ``parent Hamiltonian''.  
The latter property is what allows \cite{ticozzi_stabilizing_2012} for local steering, i.e., each jump operator needs to act on only one bond. 
Preparation of the AKLT ground state with passive steering was shown in Ref.~\cite{roy_measurement-induced_2020} for a total number of detectors that is extensive in system size. In that work, the passive steering protocol was employed to a system of \textit{non-interacting} spins-1. In the present work, the measurement-induced dynamics is superimposed onto the system dynamics governed by its own interacting Hamiltonian. We show that, remarkably, cooling a single link is sufficient to drive into the AKLT ground state for systems with $N\ge 5$ sites and then discuss the special case of chain sizes $N=3,4$.

With the nearest-neighbor interaction, two neighboring spin-1 particles can form pairs of total spins $J_{i}=0,1$ or $2$ where $J_{i}$ is the quantum number associated to the eigenvalues of $\left(\vec{J}_i\right)^2$, the square of the total spin operator $\vec{J}_{i}\equiv\vec{S}_{i}+\vec{S}_{i+1}$.
The AKLT Hamiltonian is defined as the sum of the projectors $P_{J=2}^{\left(i,i+1\right)}$ on the $J_{i}=2$ subspaces, thereby penalizing them energetically,
\begin{align}
H_{\text{AKLT}}= & \sum_{i=1}^{N}P_{2}^{\left(i,i+1\right)} \,.\label{eq:h_aklt}
\end{align}
To drive the system into the ground state of Eq.~\eqref{eq:h_aklt}, a possible choice of operators $L_{i}$ projects each $J_{i}=2$ state onto $J_{i}=0,1$ states~\cite{roy_measurement-induced_2020},
\begin{subequations}\label{eq:jumpAKLT}
  \begin{eqnarray}
L_{1}^{\left(i,i+1\right)} & =&\left(\ketbra{1,1}{2,2}\right)_{\left(i,i+1\right)}\,,\label{eq:jump_u1}\\
L_{2}^{\left(i,i+1\right)} & =&\left(\ketbra{1,1}{2,1}\right)_{\left(i,i+1\right)}\,,\label{eq:jump_u2}\\
L_{3}^{\left(i,i+1\right)} & =&\left(\ketbra{1,0}{2,0}\right)_{\left(i,i+1\right)}\,,\label{eq:jump_u3}\\
L_{4}^{\left(i,i+1\right)} & =&\left(\ketbra{1,-1}{2,-1}\right)_{\left(i,i+1\right)}\,,\label{eq:jump_u4}\\
L_{5}^{\left(i,i+1\right)} & =&\left(\ketbra{1,-1}{2,2}\right)_{\left(i,i+1\right)}\,,\label{eq:jump_u5}    
  \end{eqnarray}
\end{subequations}
where $i$ labels the sites, $\ket{J,m_J}_{\left(i,i+1 \right)}$ denotes the state of total spin $J_i$, formed from the spins on the sites $i, i+1$, with projection quantum number $m_J$. We note that the chosen jump operators are ``compatible'' with the Hamiltonian of the system in the sense that they do not affect the ground state of the Hamiltonian.

Figure~\ref{fig:aklt_monte_carlo} shows how the system state approaches the AKLT ground state as time evolves, for chain lengths up to $N=10$. Here, the five steering operators are applied to the link connecting sites 1 and 2.
The state overlap is calculated by averaging over Monte Carlo trajectories~\cite{montecarlo1} that unravel the GKSL equation~\eqref{eq:master_eq}, resampling 10000 trajectories 500 times with the bootstrap method~\footnote{The bootstrap resampling method \cite{bootsrapResampling} is useful to estimate quantities from a random distribution with low sample size. It also provides a simple way to assign confidence intervals and therefore standard deviations. The method works by repeatedly (here: 500 times) randomly resampling the data with replacement. Each of the resamples is used to estimate the quantities of interest and the final mean and variance are given by the resulting statistics.}.
Confidence intervals for Monte Carlo sampling are smaller than the linewidths of the curves. Figure~\ref{fig:aklt_monte_carlo} demonstrates an exponential approach of the AKLT ground state for all chain lengths $N$. The gap $\Delta$ is then easily extracted from the slopes in Fig.~\ref{fig:aklt_monte_carlo}. As one would expect, the approach of the ground state slows down with chain length $N$, with a clear difference between even and odd numbers of sites where the ground state is approached faster for even-length chains. This will be further analyzed below, in terms of the gaps. 

\begin{figure}[tbp]
\includegraphics[width=0.95\columnwidth]{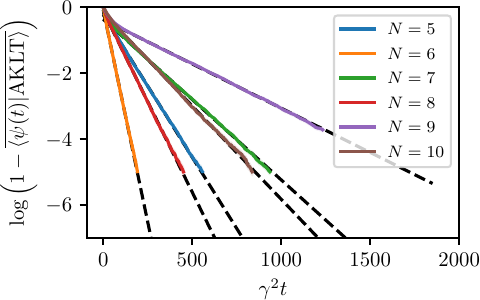}%
\caption{The logarithm of the overlap with the excited space as a function of time. We observe an exponential approach with time to the AKLT ground state $\ket{\text{AKLT}}$ for weak cooling ($\gamma = 0.1$) on a single link for several chain lengths $N$. The overlap is obtained from solving Eq.~\eqref{eq:master_eq} with the Monte Carlo wave-function method (with the bar denoting the average over the Monte Carlo trajectories). Dashed lines are linear fits and their slope is used to extract the Liouvillian gaps. 
  \label{fig:aklt_monte_carlo}}
\end{figure}

The overall possibility of dilute cooling and its dependence on system size can be rationalized as follows. Dilute cooling is the result of an interplay between local cooling and coherent non-local dynamics generated by the interactions between neighboring sites in the Hamiltonian. While excitations of the chain that are distant from the cooled link are not directly exposed to the jump operators, the interactions propagate them through the chain until they eventually reach the cooled link. Here, cooling takes place, taking energy out of the system by projecting the state onto an energetically lower subspace (from $J=2$ to $J=0,1$). If the state after the jump is the target state, then nothing more happens since the target state is left invariant by both the jump operators and the interactions. Otherwise, there still exists an excitation somewhere along the chain, and the procedure starts over by propagating the excitation through the chain until it reaches the cooled link. 

The interplay between dissipation and interaction-induced coherent dynamics naturally introduces two timescales: The dissipative timescale, given by the inverse local cooling rate, $1/\gamma$, is independent of the system size, whereas the time an excitation needs to propagate to the cooled link scales with system size. It is this scaling that explains the slowing down of the approach to the AKLT state observed in Fig.~\ref{fig:aklt_monte_carlo}. 

The slower approach of the AKLT state with increasing system size raises the question as to whether the target state can be attained in the thermodynamic limit. This is determined by the scaling of the Liouvillian gap $\Delta$ with the chain length. For our finite-size numerics, we have extracted the gaps from the linear fits in Fig.~\ref{fig:aklt_monte_carlo} and compare them to the Hamiltonian estimates~\eqref{eq:gap_bound} in  Fig.~\ref{fig:aklt_gap_extracted}. We find a finite gap $\Delta$ for chain sizes $N\ge 5$, indicating that cooling a single link drives the system towards $\ket{\text{AKLT}}$ for $t\rightarrow\infty$. The gap $\Delta$ as a function of system size roughly follows a power law for both even and odd $N$ but with different exponents $\alpha$, cf. Fig.~\ref{fig:aklt_gap_extracted}. While small chains with even length yield a larger $\Delta$ than odd chains of comparable lengths, Fig.~\ref{fig:aklt_gap_extracted} also shows that $\Delta$ decreases faster with growing even $N$ than it does for odd $N$. 

\begin{figure}[tbp]
\includegraphics[width=0.95\columnwidth]{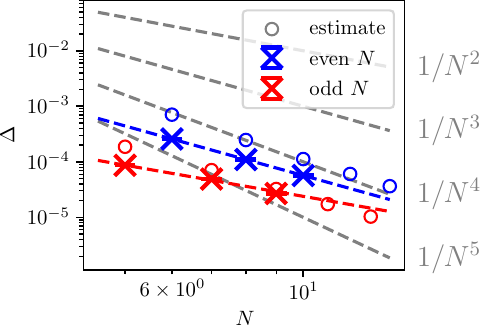}%
\caption{Liouvillian gap as a function of the AKLT chain size. Gaps (with error bars) extracted from the numerical overlaps in Fig.~\ref{fig:aklt_monte_carlo}
  are indicated with an "x" marker, gap estimates from Eq.~\eqref{eq:gap_estimate} with an "o" marker.
  A clear even-odd effect with respect to the chain length is observed, with $\Delta$ roughly following a separate power law in the two cases. Fitting to $\Delta(N)=m \cdot N^{-\alpha}$ yields exponents $\alpha_{\text{even}} = 2.97$, $\alpha_{\text{odd}} = 1.87$ and y-intersects $m_{\text{even}} = 52.3 \cdot 10^{-2}$, $m_{\text{odd}} = 17.5 \cdot 10^{-3}$. \label{fig:aklt_gap_extracted}}
\end{figure}

Chains of size $N=3,4$ turn out to be a special case in which single-link cooling fails. This is due to the existence of dark states: excited eigenstates which are also steady. These states are not subject to cooling on link 1-2 because on that link they live within the $J_1=0,1$ subspace and thus have no overlap with the local hot subspace. For $N=3$ there is only a single link with a well-defined spin as the other spin operators $\vec{J}_{2}$ and $\vec{J}_{3}$ do not commute with $\vec{J}_{1}$. The {\it dark states} all show $J_1=1$. In a chain with $N=4$ sites, there can be two links with well-defined spins, e.g., $\left(1,2\right)$ and $\left(3,4\right)$. The {\it dark states} turn out to feature $J_1=0$ and $J_3=2$. 
Starting with $N=5$, we find no {\it dark states that are steady states}. With periodic boundary conditions, $N=5$ is the smallest chain where any link with the well-defined total spin (meaning that the spin operator of the link in question commutes with $J_1$) has at least one other link with a well-defined spin between itself and the cooled link. 
Note that chains of size $N=3,4$ can be cooled in a non-dilute way, when coupling all links to detector qubits. The failure to cool chains of lengths  $N=3$ and 4 with steering a single link can be remedied by adding coherent local interactions. We first identify the general conditions under which dilute cooling is possible in Sec.~\ref{sec:cooling_conditions} and come back to the AKLT chain with $N=3,4$ in Sec.~\ref{subsec:smallAKLT}, showing that nearest-neighbor interactions are sufficient.

\subsection{Selective preparation of a degenerate ground state of the  Majumdar-Ghosh model}
\label{subsec:majumdar_ghosh}

Dilute cooling is, of course, not limited to the AKLT model. We show now that it works equally well for another frustration-free spin chain with Hamiltonian given as sum over non-commuting terms, the Majumdar-Ghosh (MG) model \cite{MajumdarGhoshOriginal}. Here, three neighboring spin-$1/2$ particles are coupled into a total spin with quantum numbers $J_i = 1/2,\, 3/2$ associated to $\left(\vec{J}_i \right)^2$ where $\vec{J}_{i} = \vec{S}_{i} + \vec{S}_{i+1} + \vec{S}_{i+2}$. The original formulation of the MG Hamiltonian was $H_\text{MG} = \sum_j^N \vec{J_i}^2$. Up to a shift in energy, the model can be formulated in terms of projectors such that the $J_i = 3/2$ subspaces are energetically penalized by the projector $P^{\left(i\right)}_{J=3/2}$ acting on sites $i,i+1,i+2$ via the Hamiltonian
\begin{equation}
    H_{\text{MG}} = 12\sum_{i=1}^{N} P^{\left(i\right)}_{J=3/2} \,,\label{eq:majumdar_ghosh}
\end{equation}
subject to PBC. We consider chains with an even number of sites $N$ because the ground state manifold is only two-fold degenerate in this case. It is spanned by the states $\ket{\psi_\pm}$---the product states of spin singlets formed from two neighboring spins with the first spin on an even ($-$), respectively odd ($+$), site,
\begin{align}
\ket{\psi_{-}}= & \bigotimes_{i=1}^{N/2}\ket{0,0}_{\left(i,i+1\right)}\,,\\
\ket{\psi_{+}}= & \bigotimes_{i=1}^{N/2}\ket{0,0}_{\left(i+1,i+2\right)}\,,
\end{align}
where $\ket{0,0}_{\left(i,i+1\right)}$ denotes the $S=0, m=0$ singlet state on link $(i, i+1)$.
Although the ground state is two-fold degenerate, it is possible to select one of them with dilute cooling and we choose $\ket{\psi_-}$ as the target ground state. 

The ground state degeneracy represents the main difference between the MG and AKLT models which needs to be accounted for when designing a cooling protocol targeting a pure target state. 
Naive adaptation of the recipe for choosing the jump operators presented in Sec.~\ref{subsec:Lchoice}, considering three neighboring sites of the MG model, suggests projecting every $J=3/2$ state onto some $J=1/2$ state. This does not, however, lead to a pure steady state but also allows for statistical mixtures of $\ket{\psi_\pm}$. To single out one of the two ground states $\ket{\psi_\pm}$, it is sufficient to perform cooling on only two neighboring sites. Two neighboring $S=1/2$ spins combine into either $J=1$ or $J=0$. We choose to cool the link $(1,2)$. The $\ket{\psi_-}$ state lives within the $J=0$ subspace on this link.
A suitable choice of operators that act on the single link $(1,2)$ transforming the states from the triplet manifold to the singlet manifold is then given by 
\begin{subequations}
\begin{align}
L_{1}= & \left(\ketbra{0,0}{1,1}\right)_{\left(1,2\right)}\,,\\
L_{2}= & \left(\ketbra{0,0}{1,-1}\right)_{\left(1,2\right)}\,,\\
L_{3}= & \left(\ketbra{0,0}{1,0}\right)_{\left(1,2\right)}\,,
\end{align}
\end{subequations}
This implies 
\[ V^{\left(1,2\right)}_{\text{hot}} = \text{span} \{ \left(\ket{1,1}\right)_{\left(1,2\right)}, \left(\ket{1,0}\right)_{\left(1,2\right)}, \left(\ket{1,-1}\right)_{\left(1,2\right)} \}
\] 
and $V^{\left(1,2\right)}_{\text{cold}} = \left(\ket{0,0}\right)_{\left(1,2\right)}$. We therefore choose $L_i$ such that any state from $V^{\left(1,2\right)}_{\text{hot}}$ is projected into $V^{\left(1,2\right)}_{\text{cold}}$. Again in analogy to the AKLT model, excitations outside of the cooled link are propagated through the chain by the interactions, until they eventually reach link $(1,2)$ where cooling takes place.

\begin{figure}[tbp]
\includegraphics[width=0.95\columnwidth]{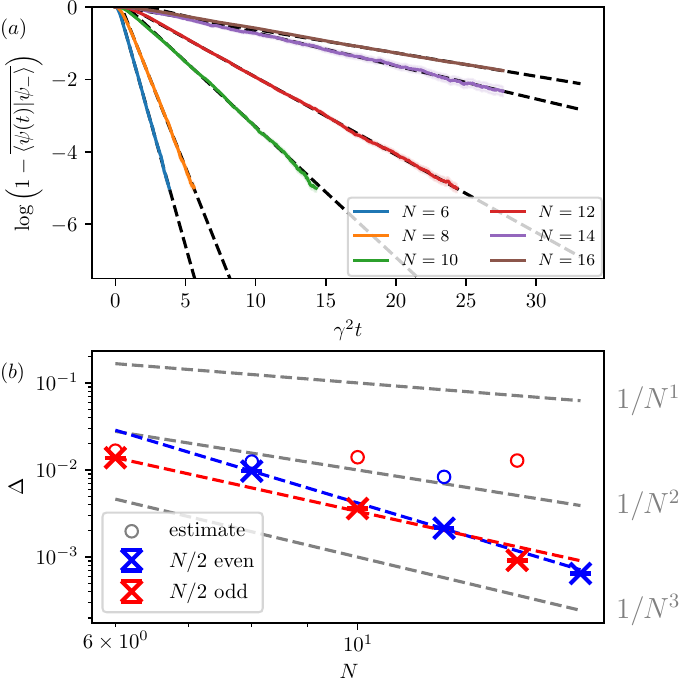}%
\caption{Scaling of the cooling rate with chain size $N$ for the MG model. (a) Overlap with the target state $\ket{\psi_{-}}$, bootstrap-averaged 500 times over 10000 trajectories. Dashed lines are linear fits used to extract $\Delta$. (b) The gap fitted to a power law $\Delta(N)=m\cdot N^{-\alpha}$. Gaps (with error bars) extracted from (a) are indicated with an "x" marker, gap estimates from Eq. ~\eqref{eq:gap_estimate} with an "o" marker, with exponents $\alpha_{N/2 \,\text{even}} =3.76 $, $N_{N/2\,\text{odd}} = 2.78$. The y-intersects are $m_{N/2 \text{ even}} = 24.4$ and $m_{N/2 \text{ odd}} = 2.02$. \label{fig:mg_overlap}}
\end{figure}

The approach to the target state is shown in Fig.~\ref{fig:mg_overlap}(a) for chains with an even number of sites up to $N=16$, where we have used the same numerical method as in Sec.~\ref{subsec:aklt}. The target state is approached exponentially in time for chains of length $N \geq 6$ with the exponent, i.e., the slope in Fig.~\ref{fig:mg_overlap}(a), which depends on $N$ differently for chains with even and odd values of $N/2$. This difference is more clearly seen in Fig.~\ref{fig:mg_overlap}(b) plotting the gap obtained from the slopes in Fig.~\ref{fig:mg_overlap}(a).  The gap estimate~\eqref{eq:gap_estimate} gives a fairly good prediction of $\Delta$ for system size up to $N=10$ in Fig.~\ref{fig:mg_overlap}. 

As explained in Appendix~\ref{app:gap_estimate}, the estimate can only be expected to give a tight bound if the difference in eigenenergies $\delta \epsilon = |\epsilon - \epsilon'|$ for any pair of energies $\epsilon, \epsilon'$ is large compared to $\gamma$ but for chains with $N\geq 10$, this is no longer the case. The gap roughly follows a power law similar to the AKLT model but the size of the gap is larger by about two orders of magnitude compared to the AKLT chains of similar size, cf. Fig.~\ref{fig:aklt_gap_extracted}.
This is in line with the overlaps $Q$, cf. Eq.~\eqref{eq:q_epsilon}, also being roughly two orders of magnitude larger compared to the AKLT model. It is thus easier to scale up the system size in the dilute cooling of the MG model, in the sense that for similar system sizes the target state will be reached much faster in the MG model as compared to the AKLT state.

\section{Conditions for dilute cooling \label{sec:cooling_conditions}}

Detector qubits are a precious resource. In contrast, the ability to engineer quasi-local coherent interactions is a basic prerequisite for the realization of any measurement-induced dynamics --- the system needs to be coupled to the detector qubits in the first place. It should then also be comparatively straightforward to alter the Hamiltonian by adding extra terms, as long as they are quasi-local, see also Fig.~\ref{fig:schematic_cooling}. 

This defines a paradigm that allows us to remedy the failure of dilute cooling protocols utilizing only the measurement-induced jump operators, as in the example of the AKLT chain of length $N=3,4$. We formulate a general framework to leverage quasi-local interactions to overcome this problem. To this end, we identify necessary and sufficient conditions for dilute cooling:  Assuming full quasi-local coherent control over the system (in the sense that we are free to add to the system Hamiltonian any quasi-local operator), under which 
conditions becomes a desired target state $\rho_\oplus=\ketbra{\psi_\oplus}$  the unique steady state of $\mathcal{L}$? In other words, the answer to the question of whether a given $\ket{\psi_\oplus}$ is obtainable via blind dilute cooling does then not depend on $H$ implementing $\ket{\psi_\oplus}$ as its ground state.

\subsection{Permissible quasi-local coherent interactions}\label{subsec:Hchoice}

We restrict ourselves to the case of nearest-neighbor interactions but longer-ranged interactions or more complicated neighborhood definitions are also possible. To set the stage, we first define the space of allowed additional coherent terms. They have to fulfill two conditions: (i) they should be quasi-local, and (ii) they need to leave the target state $\ket{\psi_\oplus}$ invariant. A suitable operator basis to represent them is the generalized Pauli basis $\{\sigma^{i}_j\}$, where $j\leq d^2$ with $d$ the dimension of the local Hilbert space on each site $i$. The space of all quasi-local operators can then be written as
\begin{align*}    \mathcal{O}&=\text{span}\cup_{i=1}^{N}\mathcal{O}^{\left(i,i+1\right)}\,, 
\\
\mathcal{O}^{\left(i,i+1\right)}&=\text{span}\left\{ \sigma_{j}^{\left(i\right)}\otimes\sigma_{k}^{\left(i+1\right)}:j,k\leq d^{2}\right\} \,.
\end{align*}
From $\mathcal{O}$, we select those operators that leave $\ket{\psi_\oplus}$ invariant. In the most general sense, this requires $\ket{\psi_\oplus}$ to be an eigenstate to all the selected operators. In order to obtain a vector space structure, we only allow for selected operators $A\in \mathcal{O}$, such that $\ket{\psi_\oplus}$ is eigenstate to $A$ with eigenvalue $0$. We thus define 
\begin{equation}
    K_{\ket{\psi_{\oplus}}}	\equiv\left\{ A\in\text{span }\mathcal{O}:A\ket{\psi_{\oplus}}=0\right\} \oplus \{ \mathbf{1}\} \,, \label{eq:def_kernelizer}
  \end{equation}
dubbing the space of allowed quasi-local coherent terms appropriately the \emph{kernelizer}. Note that we do not lose any generality by restricting to annihilators of $\ket{\psi_\oplus}$ since operators $A\in\mathcal{O}$ with $A\ket{\psi_\oplus} = \lambda \ket{\psi_\oplus}$ and $\lambda \neq 0$ can be obtained by simply adding the identity to $K_{\ket{\psi_{\oplus}}}$.

The key property of $K_{\ket{\psi_{\oplus}}}$ as defined in Eq.~\eqref{eq:def_kernelizer} is to separate the system Hilbert space $\mathcal{H}$ into the target space $\mathcal{H}_{\oplus}=\text{span}\left\{ \ket{\psi_{\oplus}}\right\}$ and its complement $\mathcal{H}_\oplus^\complement$. Our definition~\eqref{eq:def_kernelizer} comes with the advantage that $K_{\ket{\psi_{\oplus}}}$ has a vector space structure over $\mathds{R}$ and contains only Hermitian operators. The kernelizer is thus spanned by all coherent interactions affecting only $H_\oplus^\complement$. The purpose of the separation of Hilbert space is two-fold: (i) it allows for identifying the operators that leave the target state invariant and (ii) it implies a straightforward sufficient condition for blind cooling in terms of operator controllability on the complement $H_\oplus^\complement$, see Sec.~\ref{subsec:suffcond} below.

\subsection{Necessary conditions} \label{subsec:necessarycond}
We identify two necessary conditions for dilute cooling, one on the jump operators and one on the coherent interactions.
The condition on the jump operators formalizes the intuition that cooling should not affect the target state, whereas the condition on the coherent interactions is required to ensure population flow to the cooled link. We state the conditions here and prove them in Appendix~\ref{app:necessarycond}.

The condition on the jump operators can be stated as follows:
\begin{subequations}\label{eq:neccond_meas}
Cooling on a given link $\left(i, i+1 \right)$ is permissible if the local hot subspace is not empty,
\begin{equation}
  \mathcal  V_{\text{hot}}^{\left(i,i+1\right)}  \neq\emptyset\,. 
  \label{eq:v_hot_cond_1}   
\end{equation}
Equivalently, since $\mathcal{H}_{\left(i,i+1\right)} =\mathcal V_{\text{cold}}^{\left(i,i+1\right)}\oplus \mathcal V_{\text{hot}}^{\left(i,i+1\right)}$, cooling is permissible if the local cold subspace is a proper subset of the local Hilbert space of the link $(i,i+1)$,
\begin{equation}
 \mathcal  V_{\text{hot}}^{\left(i,i+1\right)}  \subsetneq \mathcal{H}^{\left(i,i+1\right)} \,.\label{eq:v_hot_cond_3}  
\end{equation}
\end{subequations}
Note that condition ~\eqref{eq:neccond_meas} is equivalent to Eq.~\eqref{eq:cond-jump}: If the hot subspace is not empty, one can always choose a set of jump operators which fulfill Eqs.~\eqref{eq:cond-jump}. Conversely, given a set of jump operators obeying conditions ~\eqref{eq:cond-jump}, there necessarily has to exist a finite hot subspace.

Assuming that condition~\eqref{eq:neccond_meas} is met, the jump operators $L_j^{(i,i+1)}$ should ideally be chosen as $L^{\left(i,i+1\right)}_j = \ketbra{\phi_{\text{c},j}}{\phi_{\text{h},j}}$ such that they map \emph{every} state in $\mathcal  V_{\text{hot}}^{\left(i,i+1\right)}$ to some states in $\mathcal  V_{\text{cold}}^{\left(i,i+1\right)}$, as suggested in Sec.~\ref{subsec:Lchoice}.
The choice of the $\{L_j^{(i,i+1)}\}$ is not unique. For example, the jump operators for the AKLT chain, Eq.~\eqref{eq:jumpAKLT}, map the five states with $J=2$  onto the four states with $J=0,1$ such that the sign of $J_z$ is preserved. Another of many conceivable choices would be $\{L_j^{(i,i+1)}\}$ that flip the sign of $J_z$.

A necessary condition on the coherent interactions can be stated in terms of the kernelizer.
Considering the coherent part of the equation of motion~\eqref{eq:master_eq}, a necessary condition for cooling is that there should be no state other than the target state that is invariant to both cooling and \emph{all} coherent interactions from the kernelizer. In particular, excitations on links that are not cooled have to propagate through the system until they reach the cooled link. We formalize this condition as
\begin{multline}
\nexists\ket{\phi}\in\mathcal{H}\setminus\left\{ \ket{\psi_{\oplus}}\oplus \left( \mathcal{V}_{\text{hot}}^{\left(i,i+1\right)}\bigotimes_{j\neq i}\mathcal{H}_{j} \right)\right\} \text{ such that } \\
\ket{\phi}\text{ is eigenstate of \emph{all} }A\in K_{\ket{\psi_{\oplus}}} \label{eq:necessary_condition}.
\end{multline}

The kernelizer contains \textit{all} Hermitian operators that leave the target state invariant, not only those that are part of the system Hamiltonian. This is why condition~\eqref{eq:necessary_condition} is necessary but not sufficient.

Condition~\eqref{eq:necessary_condition} explains why certain states, for example the highly entangled GHZ and W states,
\begin{align}
    \ket{\psi_\text{GHZ}}&=\frac{1}{\sqrt{2}}\left(\ket{0}^{\otimes N}+\ket{1}^{\otimes N}\right), \\
    \ket{\text{W}}&=\frac{1}{\sqrt{N}}\left(\ket{100\ldots0}+\ket{010\ldots0}+\ldots+\ket{00\ldots01}\right)
\end{align}
cannot be obtained by dilute cooling. For these states and $N\ge 5$, one can identify a complementary state that is not locally distinguishable from $\ket{\psi_\text{GHZ}}$, respectively $\ket{\text{W}}$, namely the antisymmetric GHZ state,  $\ket{\phi_\text{GHZ}}=\frac{1}{\sqrt{2}}\left(\ket{0}^{\otimes N}-\ket{1}^{\otimes N}\right)$ and the state $\ket{\text{W}_0}=\ket{000\ldots 0}$, respectively. Indeed, the reduced states of $\ket{\psi_{\text{GHZ}}}$ and $\ket{\phi_{\text{GHZ}}}$ on any link are identical which implies that there is no quasi-local operator, neither as part of the Hamiltonian nor the jump operators, that can discriminate between the two states. In contrast, with less than five qubits, condition~\eqref{eq:necessary_condition} is satisfied, and dilute cooling is possible. Given that a link involves two qubits, our argument is in line with the general proof~\cite{ticozzi_steady-state_2014} that cooling towards a GHZ state requires measurements on at least $N/2$ qubits~\footnote{To be precise, non-dilute cooling towards a GHZ state requires conditional stabilization, i.e., active steering~\cite{ticozzi_steady-state_2014}. Passive steering is possible when targeting a GHZ state only approximately~\cite{martin2023}}. Beyond the framework of Hamiltonians and jump operators with quasi-local interactions that we consider here, dissipative preparation of a GHZ can also be achieved with a single auxiliary degree of freedom, provided it is coupled globally to all sites~\cite{Morigi:2015}.

\subsection{Sufficient condition}\label{subsec:suffcond}

The intuition that the role of the quasi-local coherent interactions is to ensure population flow towards the cooled link allows us to formulate a sufficient condition for dilute cooling. 
The desired population flow can surely be realized if {\it any} unitary evolution on the complement Hilbert space $\mathcal{H}_\oplus^\complement$ is possible. The latter is guaranteed if the system part that is defined on $\mathcal{H}_\oplus^\complement$ is controllable, and controllability can be checked via the rank of the dynamical Lie algebra~\cite{dalessandro_introduction_2007}. In our case, what matters is the Lie Algebra $\mathfrak{L}\left(K_{\ket{\psi_{\oplus}}} \right)$ generated by the kernelizer. The corresponding Lie rank condition reads
\begin{equation}
    \dim\mathfrak{L}\left(K_{\ket{\psi_{\oplus}}} \right)	=\left(d^{N}-1\right)^{2}  - 1\,.\label{eq:sufficient_condition}
\end{equation}
If Eq.~\eqref{eq:sufficient_condition} holds, then there exist, in principle, sufficiently many interactions leaving the target state invariant such that any time evolution in $\mathcal{H}_\oplus^\complement$ can be generated.
Since it is a condition purely on the \emph{coherent} part of the dynamics, the necessary condition on the dissipative part~\eqref{eq:neccond_meas} must also be fulfilled. We note that this condition is sufficient: Controllability is a strong assumption and may not be necessary for cooling to work. But if condition~\eqref{eq:sufficient_condition} is met, then it is possible to add to the Hamiltonian operators from the kernelizer that induce the desired propagation of excitations through the chain towards the cooled link. Thus, Eq.~\eqref{eq:sufficient_condition} together with Eq.~\eqref{eq:neccond_meas} indeed ensures cooling towards $\rho_\oplus$.  

To illustrate the relationship between population flow and controllability, we verify in Appendix~\ref{app:sufficient_condition} that the AKLT model~\eqref{eq:h_aklt} fulfills the condition~\eqref{eq:sufficient_condition} for unitary controllability in the complement to the target subspace. In fact, it does so for any $N$. This implies that blind cooling should be possible also for chains of size $N=3,4$. In other words, condition~\eqref{eq:sufficient_condition} tells us how to amend the AKLT Hamiltonian such that blind cooling becomes possible also for the small chains.

\subsection{Use of the sufficient condition to design single link cooling for the AKLT chain with  $N=3,4$ sites }\label{subsec:smallAKLT}

For chains of size $N=3,4$ and cooling on a single link, multiple steady states exist for the AKLT model. However, since the sufficient condition~\eqref{eq:sufficient_condition} is satisfied for any $N$ (see Appendix~\ref{app:sufficient_condition}), there exists a coherent interaction $\delta H$ in the kernelizer, which renders $\ket{\text{AKLT}}$ the unique steady state. One possible choice is to add
\begin{subequations}\label{eq:aklt_s2_x}   
\begin{align}
    J_{2,x}^{\left(i,i+1\right)}=&\sum_{-2\leq m_{j}\leq2}\left(\ketbra{2,m_{j}}{2,m_{j}+1}\right)_{\left(i,i+1\right)}+h.c. 
\end{align}
on all but one link, i.e.,
\begin{equation}
    \delta H = \alpha \sum_{i=1}^{N-1} J_{2,x}^{\left(i,i+1\right)}\,.
\end{equation}
\end{subequations}
Since $J^{\left(i,i+1\right)}_{2,x}$ only acts on the local $J=2$ subspace, it is an element of the kernelizer, i.e., permissible. Its specific choice can be motivated as follows. We seek the generator of an evolution that mixes all the excited states to guarantee propagation towards the cooled link. The additional coherent interaction~\eqref{eq:aklt_s2_x} mimics the $J_x$ action on the $J=2$ subspace, generating rotations of the spin around the $x$ axis. 

Given our choice of $\delta H$, it will now depend on the interaction strength $\alpha$ and the dissipation rate $\gamma$ whether and how quickly the $\ket{\text{AKLT}}$ is reached. The rate of approach is determined by the Liouvillian gap, shown in Fig.~\ref{fig:aklt_N=3} for $N=3$ as a function of $\alpha$ and $\gamma$. A comparison to the gap estimate~\eqref{eq:gap_estimate}, obtained via exact diagonalization of $H_\text{AKLT}$, is also presented in Fig.~\ref{fig:aklt_N=3}(b). In the weak cooling limit ($\gamma \ll 1$), in which Eq.~\eqref{eq:master_eq} is valid, the estimate captures the gap very accurately. 
When the coupling to the detector becomes comparable or larger than the nearest-neighbor interactions, one would need to solve the full system-detector dynamics~\cite{puente2023}. In this regime, the gap is expected to be suppressed due to the quantum Zeno effect which freezes the coherent evolution. 
If the dissipation is strong compared to the interactions, the dissipation immediately removes any change to the state due to the interactions, before propagation through the chain can happen. Even in the weak coupling limit, Fig.~\ref{fig:aklt_N=3}(a) suggests maximizing $\Delta$ by proper choice of the relative strengths of the additional coherent interaction and coupling to the detector qubit. More generally, one may also optimize the choice of the additional interactions $\delta H$ out of all the operators contained in the kernelizer $K_{\ket{\psi_{\oplus}}}$.
For $N=4$, the AKLT model behaves in much the same way, i.e., dilute cooling becomes possible by adding coherent interactions.

\begin{figure}[t!]
\includegraphics[width=1\columnwidth]{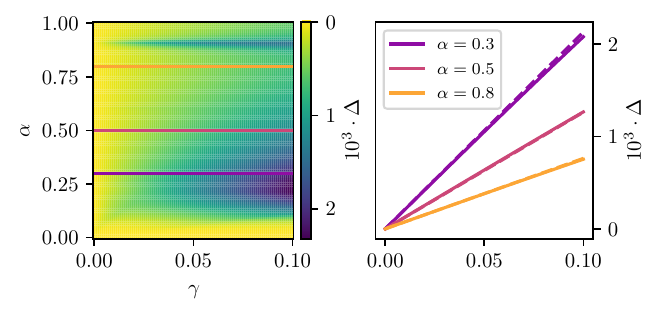}
\caption{Liouvillian gap of the AKLT chain for $N=3$ with the additional coherent interaction~\eqref{eq:aklt_s2_x} on links $(1,2)$ and $(2,3)$ as a function (a) of coherent interaction strength $\alpha$ and dissipation rate $\gamma$ and (b) of $\gamma$ for three values of $\alpha$ (solid lines). The gap estimates, cf. Eq.~\eqref{eq:gap_estimate}, are also shown (dashed lines). The solid lines in (a) indicate the values of $\alpha$ shown in (b). \label{fig:aklt_N=3}
}
\end{figure}

More generally, these observations raise the question of how to choose $\delta H$. As already mentioned above, this may be viewed as an optimization problem on the kernelizer space, targeting maximum $\Delta$ or at least a non-zero gap. Unfortunately, the kernelizer space is high-dimensional, and the simplest way to approach the question is to randomly sample $\delta H \in K_{\ket{\psi_{\oplus}}}$ and then calculate $\Delta$. We have found in all of $1000$ samples that such random sampling yields $\delta H$ that results in successful cooling. We are thus led to conjecture that the set of $H \in K_{\ket{\psi_{\oplus}}}$ not leading to dilute cooling is sparse, possibly of measure zero. This is in line with earlier observations in the context of system size-extensive (i.e., non-dilute) steering~\cite{ticozzi_steady-state_2014}.

\section{Discussion: General framework for dilute measurement-induced cooling}
\label{sec:disc}

After showcasing dilute cooling for two paradigmatic many-body models with finite-size numerics, we now turn to the overall applicability of our approach. First, we explore how far one can push the system size in the dilute cooling of many-body systems, given that the approach of the target state slows down as the system size increases. Second, based on a summary of the general framework of dilute cooling, we discuss to which systems it can be applied.

The success of dilute cooling as the size of a many-body system increases
depends both on the type and on the strength of the interactions, within the system and of the system with the meter. The interaction strengths give rise to competing timescales, whose scaling with system size is estimated in the following with the help of a classical diffusion picture. This can readily be motivated by viewing the jump operators as implementing a localized sink for excitations. A classical picture obviously misses many subtleties, for example, the difference in the propagation speed for classical and non-classical correlations~\cite{Roy:2023}, as well as the possibility of (partial) integrability of the model~\cite{Moudgalya2018}. Therefore, the scaling arguments presented below will only give a rough estimate for generic correlated systems.

Consider a classical one-dimensional diffusion equation with a $\delta$-function source (or rather sink). Denoting the diffusion constant by $D$ and the sink strength by $A$, the following types of relaxation behavior for a system of size $L$, characterizing its spatial extent, can be expected (see Supplemental material~\cite{SuppMat} for details of the derivation):
\begin{itemize}
    \item the “ballistic regime” --- for small systems, $L \ll \ell$, where $\ell$ is the mean free path with respect to the interaction of elementary excitations,  the dissipation rate is determined by the measurement rate $\gamma$ and the many-body level spacing;
    \item the intermediate diffusion regime --- for $\ell \ll L \ll D/A$, the decay is exponential with decay time $t_A = L/A$, where $A = \ell\gamma$ and $D$ itself is determined by $\ell$ (note that the diffusion constant does not enter the decay rate explicitely in this regime);
    \item for large systems, $L \gg D/A$, the time dependence of the survival probability is first given by $t^{-1/2}$, but becomes exponential at $t\gg t_D = L^2/D$ with the `gap' given by $1/t_D$. In the limit of an infinite system, excitations decay in a power-law (more precisely, square-root) manner, implying zero gap for $L = \infty$, as it should be.
\end{itemize}
For any finite system, all the above regimes correspond to a finite cooling time: for long times the removal of excitations from the system occurs exponentially. The numerical calculations presented in this work cover the ``ballistic regime'' of not-too-large systems. Successful cooling in such ballistic systems suggests that the sink satisfies the conditions for cooling in larger systems once diffusion sets in, which guarantees single-sink cooling in an infinite system in the limit of infinite times.       

The assumption of quasi-local interactions is key to the competition of timescales as discussed above: For extended systems with quasi-local interactions, the speed with which excitations can propagate to the cooled site(s) is necessarily finite. The picture changes for global interactions, as is the case for example in cavity QED~\cite{defenu:2023}, where the upper bound on propagation is set by the system size~\cite{Tran:2020}. Dilute cooling in finite time is then in principle possible even in the limit of infinite system size. For large systems with interactions characterized by light-cone propagation of excitations~\cite{Tran:2020}, the best strategy for dilute cooling will be to balance ``diluteness" with system size. For example, one can conceive a protocol where only every tenth link is cooled such that the number of "meters" is still considerably smaller than the number of system sites.

Next, we provide a simple ``recipe" to identify systems amenable to dilute measurement-induced cooling, drawing directly on the examples of the AKLT and MG model:
\begin{itemize}\setlength{\itemsep}{0pt}
    \item Consider as target state a ground state of a frustration-free projector-based Hamiltonian; 
\item  This state can be stabilized (even without the system own Hamiltonian) by a frustration-free Lindbladian with local terms acting on all parts of the system;
\item The local jump operators of such a Lindbladian suggest the form of the local coupling of the system parts (cells) to the measurement apparatus (meter);   
\item  In the system with its own Hamiltonian,  add the coupling to a meter qubit inferred above only at a single elementary cell of the system (a link, i.e., two neighboring sites in the AKLT and MG examples above);
\item Unless the Hamiltonian has a perfectly flat band of excitations or there exist excitations that do not overlap with the chosen cell, 
dilute cooling is expected to work: for a finite system, all excitations will be essentially removed within a finite time.                      
\end{itemize}

The simple recipe can be generalized beyond frustration-free projector-based Hamiltonians, such as the AKLT and MG models, and can be made to work even in the case of flat bands or excitations avoiding the sink (as in $N=3,\, 4$ AKLT model). The kernelizer (Sec.~\ref{sec:cooling_conditions}) is key to this generalization. It allows for identifying Hamiltonians, Lindbladians, and possibly additional interactions to realize dilute cooling. Use of the kernelizer results in the following extension of the simple recipe:
\begin{itemize}\setlength{\itemsep}{0pt}
    \item Find a system with a ground state that fulfills the necessary condition~\eqref{eq:neccond_meas} and choose a set of quasi-local jump operators according to the recipe
    \item Check if the system fulfills the sufficient condition~\eqref{eq:sufficient_condition} on the kernelizer, i.e., the existence of sufficiently many Hermitian operators acting on the hot subspace only, such that their Lie algebra is of full rank. In other words, check if the system is completely controllable on the (global) hot subspace; 
    \item Add as many operators from the kernelizer to the Hamiltonian as needed to ensure a unique steady state.
\end{itemize}
The most difficult step is the first one, but earlier findings for non-dilute cooling with quasi-local couplings provide a good starting point~\cite{ticozzi_stabilizing_2012, ticozzi_steady-state_2014, JohnsonQIP2016}. Good candidates are, for example, systems with product entangled pair states~\cite{kraus_preparation_2008, Perez-GarciaQIC2008, verstraete_quantum_2009}, including matrix-product states for one-dimensional systems.

\section{Concluding Remarks}\label{sec:concl}
 
Using quantum measurements to engineer desired dissipative dynamics comes with the advantage of versatility and robustness but is also very costly. The scaling of resources for measurement-induced dynamics with system size is an important open but often overlooked question. We have addressed this question for measurement-induced cooling into many-body ground states. Using finite-size numerics and analytical arguments, we have shown that the number of detector qubits can under certain conditions be reduced (``diluted'') all the way to the extreme case of a single detector. 

In particular, we have demonstrated dilute cooling for two examples: the ground states of the AKLT and MG models. Both are frustration-free spin chains where the Hamiltonian is given by a sum over non-commuting quasi-local terms. In general, cooling or passive steering requires couplings within the system and with the detector that leave the target state invariant. For dilute cooling via weak measurements, this translates into necessary conditions on the jump operators and coherent interactions. To formalize these conditions, we have introduced two key concepts---the kernelizer and the local hot and cold subspaces. The latter are defined via the projection of the target state onto the local Hilbert space of the sites that are cooled. Invariance of the target state is ensured by the jump operators fulfilling Eq.~\eqref{eq:cond-jump} or, equivalently,  the local hot subspace not being empty. The kernelizer is the vector space of all quasi-local Hermitian operators that leave the target state invariant. Since these operators generate the flow of excitations towards the cooled site(s), a necessary condition is their mere existence. But the kernelizer is even more useful: It allowed us to state a sufficient condition for dilute cooling, in terms of controllability on the Hilbert space complement to the target state. Controllability of the kernelizer implies the existence of sufficiently many quasi-local Hermitian operators to generate any conceivable unitary evolution on the complement space. A constructive recipe to implement dilute cooling is then obtained simply by adding coherent interactions from the kernelizer to the Hamiltonian.

Our approach of using controllability to identify couplings within the system that enable dilute cooling is related to the algebraic approach of Ref.~\cite{yoshida2023SteadyStateUniqueness, zhang2023} to determine the operators that ensure uniqueness of the steady state. Combining the two should allow for an even larger class of many-body target states than discussed here.

To conclude, we have addressed the challenge of measurement-based cooling a many-body quantum system to its ground state without the costly burden of extensive measurements. By showcasing the effectiveness of cooling on a single link for frustration-free one-dimensional spin chains, we have demonstrated that passive steering protocols can be taken to the dilute limit under specific conditions. Our findings, supported by both analytical arguments and finite-size numerical simulations, emphasize the potential for reducing the resources needed for measurement-induced dynamics. Moving forward, these insights pave the way for further applications, offering a more cost-effective and versatile approach to state engineering and state manipulation in quantum systems. 
 
\begin{acknowledgments}
  The authors gratefully acknowledge financial support from the Deutsche Forschungsgemeinschaft (DFG), Project No. 277101999, CRC 183 (project B02), grants No. EG 96/13-1 and GO 1405/6-1, and Project-ID 429529648 -- TRR306 QuCoLiMa (``Quantum Cooperativity of Light and Matter'', project D02), from the German Federal Ministry for Education and Research (BMBF), Project No. 13N16201 NiQ, and the Israel Science Foundation. This research was supported in part by the National Science Foundation under Grants No. NSF PHY-1748958 and PHY-2309135. This work is part of HQI (\url{www.hqi.fr}) and BACQ initiatives and is supported by France 2030 program under the French National Research Agency grants with numbers ANR-22-PNCQ-0002 and ANR-22-QMET-0002 (MetriQs-France).
\end{acknowledgments}

\appendix

\section{Gap estimate \label{app:gap_estimate}}

Here we derive the estimate~\eqref{eq:gap_est} for small dissipation
rates $\gamma$. We do this by explicitly constructing an eigenstate of the Lindbladian that has eigenvalue $-i\omega-\kappa$ with $\kappa\approx\Delta_{\mathrm{est}}$.
Having an eigenstate with $\kappa\approx\Delta_{\mathrm{est}}$
guarantees that the Lindbladian gap does not exceed this value, i.e.,
$\Delta\lesssim\Delta_{\mathrm{est}}$.

Consider
\begin{equation}
\rho_{\mathrm{coh}}=\ket{\psi_{\mathrm{exc}}}\bra{\psi_{\oplus}},
\end{equation}
where $\ket{\psi_{\mathrm{exc}}}\in\mathcal{H}/\ket{\psi_{\oplus}}$
is any state in the excited subspace. $\rho_{\mathrm{coh}}$ represents
coherences between the target state and the excited subspace and will
decay as the population of the excited subspace gets eliminated. Under
Eq.~\eqref{eq:master_eq}, the time evolution of the coherences is governed by an effective Hamiltonian,
\begin{equation}
  \tilde{H}  =H_{s}-\epsilon_{\oplus}-i\frac{\gamma}{2}P_{\mathrm{hot}}.\label{eq:H_tilde}
\end{equation}
with
$\epsilon_{\oplus}$ the ground state energy of $\ket{\psi_{\oplus}}$
and 
\[
  i\frac{d}{dt}\rho_{\mathrm{coh}}  =\tilde{H}\rho_{\mathrm{coh}}\,,
\]
where we have used the identities $\bra{\psi_{\oplus}}L_{i}^{\dagger}=0$
and $\sum_{i}L_{i}^{\dagger}L_{i}=P_{\mathrm{hot}}$. Therefore, any
eigenstate $\ket{\lambda}$ of the non-Hermitian Hamiltonian $\tilde{H}$
restricted to the excited subspace gives rise to an eigenstate of
the Lindbladian governing the evolution~\eqref{eq:master_eq}:
\begin{equation}
\tilde{H}\ket{\lambda}=\lambda\ket{\lambda}\Rightarrow\mathcal{L}\left(\ket{\lambda}\bra{\psi_{\oplus}}\right)=-i\lambda\ket{\lambda}\bra{\psi_{\oplus}}.
\end{equation}

For $\gamma\rightarrow0$, the eigenstates of $\tilde{H}$ coincide
with the eigenstates $\ket{\epsilon}$ of $H_{s}$ up to corrections
$O(\gamma/\delta\epsilon)$, where $\delta\epsilon$ represents the level
spacing of $H_{s}$. Then, by virtue of first-order perturbation
theory, the respective decay eigenvalues are $\kappa=-\mathrm{Re}\left(-i\lambda\right)=-\mathrm{Im}\,\lambda=\frac{\gamma}{2}\expval{P_{\text{hot}}}{\epsilon}$
up to corrections $O(\gamma^{2}/\delta\epsilon)$. In the case of
degenerate eigenstates of $H_{s}$, one needs to diagonalize $P_{\mathrm{hot}}$
in each energy subspace in order to find the eigenstates of $\tilde{H}$.
The smallest $\kappa=\Delta_{\mathrm{est}}$ corresponds to the state
$\ket{\epsilon}$ with the smallest overlap with the hot subspace,
leading to Eq.~(\ref{eq:q_epsilon}).

\section{Proofs of the necessary conditions for dilute cooling}\label{app:necessarycond}

We provide here proofs of the necessary and sufficient conditions for dilute cooling stated in Sec.~\ref{subsec:necessarycond}.

A jump operator is permissible if it leaves the target state invariant in the sense that $L^{\left(i,i+1\right)}\otimes \mathds{1}^\complement \ket{\psi_\oplus} = 0$. Not all operators with this property lead to cooling which is why this is only a necessary condition. As a first step, we prove that a necessary condition for the existence of a non-trivial (i.e., non-zero) permissible jump operator on link $\left(i,i+1\right)$ is for the reduced density operator to not be full rank. This can be seen to be equivalent to the condition of a non-empty local hot subspace, Eq.~\eqref{eq:v_hot_cond_1}, by recalling  definition~\eqref{eq:v_hot_measurements} and using that any linear operator has full rank if and only if its kernel contains only $0$.

For convenience of notation we write $L=L^{\left(i,i+1\right)}$ assuming that the jump operator only acts on the cooled link $\left(i,i+1\right)$. \\
Claim: $L\otimes\mathds{1}^{\complement}\ket{\psi_{\oplus}}=0\Rightarrow\text{rk }\rho_{\oplus}^{\left(i,i+1\right)}<d^{2}$ \\
We prove the contraposition \[\text{rk }\rho_{\oplus}^{\left(i,i+1\right)}=d^{2}\Rightarrow L\otimes\mathds{1}^{\complement}\ket{\psi_{\oplus}}\neq0\,.\]
Consider the r.h.s.; it is equivalent to 
\begin{equation}\label{eq:Lcompl}
    L\otimes\mathds{1}^{\complement}\ketbra{\psi_{\oplus}}L^{\dagger}\otimes\mathds{1}^{\complement}\neq 0\,.
\end{equation}
Since the l.h.s. of Eq.~\eqref{eq:Lcompl} is a projector and using the cyclicity of the trace, \[\text{Tr}\left(L^{\dagger}L\otimes\mathds{1}^{\complement}\ketbra{\psi_{\oplus}}\right)\neq0\,.\]
Taking the partial trace over all but the cooled link allows us to rewrite Eq.~\eqref{eq:Lcompl} in terms of the reduced state, \[\text{Tr}\left(L\rho_{\oplus}^{\left(i,i+1\right)}L^{\dagger}\right)\neq0\,.\]
By assumption, $\rho_{\oplus}^{\left(i,i+1\right)}$ has full rank and we may write \[\rho_{\oplus}^{\left(i,i+1\right)}=\sum_{n=1}^{d^2}\lambda_{n}\ketbra{n}\] with $\lambda_{n}>0$ and $\left\{ \ket{n}\right\} $ an orthonormal basis of $\mathcal{H}_{i,i+1}$. Then the above trace becomes $\sum_{n}\lambda_{n}\text{Tr}\ketbra{\tilde{n}}$ with $\ket{\tilde{n}}\equiv L\ket{n}$. Therefore $\sum_{n}\lambda_{n}\text{Tr}\ketbra{\tilde{n}}>0$ unless $\ket{\tilde{n}}=0$ for all $n$. Since $L$ is required to be non-trivial, we can exclude this case which concludes the proof.

We now turn to the necessary condition on the coherent interactions, Eq.~\eqref{eq:necessary_condition}.
To grasp the intuition, consider a state $\ket{\phi}$ that is neither the target state nor in the hot subspace. Because it is not in the hot subspace, it cannot be affected by any jump operator (we had introduced the jump operators to only act on the hot subspace). Then $\ket{\phi}$ is not directly subject to cooling, and cooling can only be mediated through interactions. In order for this to work, there has to be some interaction $H$ from the kernelizer which connects $\ket{\phi}$ to some other state $\ket{\phi '}=H\ket{\phi}$. It may be that $\ket{\phi'}$ is then subject to cooling or that $\ket{\phi'}$ is connected via interactions to yet another state until eventually a state subject to cooling is reached. This observation can be phrased as a necessary condition on the interactions: For every such state $\ket{\phi}$ that is not directly subject to cooling there has to exist an operator from the kernelizer that maps $\ket{\phi}$ to some other state $\ket{\phi'}\neq \lambda \ket{\phi}$ for some $\lambda\in\mathds{C}$. We formalize this condition as
\begin{eqnarray*}
    \forall\ket{\phi}\in\mathcal{H}\setminus\left\{ \ket{\psi_{\oplus}}\oplus \left( \mathcal{V}_{\text{hot}}^{\left(i,i+1\right)}\bigotimes_{j\neq i}\mathcal{H}_{j} \right)\right\} \\
    \exists H\in K_{\ket{\psi_{\oplus}}}:\ket{\phi'}=H\ket{\phi}\neq\lambda\ket{\phi}\,.
\end{eqnarray*}
Negating both parts of the statement leads to Eq.~\eqref{eq:necessary_condition}: There can be no state $\ket{\phi}$ being neither the target nor in the hot subspace such that $\ket{\phi}$ would be an eigenstate to all operators from the kernelizer.

\section{Unitary controllability for the AKLT model\label{app:sufficient_condition}}

As an illustrative example, we show here that the sufficient condition for dilute cooling on the interactions, Eq.~\eqref{eq:sufficient_condition}, i.e., full unitary controllability on the complement of the target state, is satisfied for the AKLT chain. 
Our argument consists of three steps: First, we present the general structure of a subset of the kernelizer for the AKLT model. Although only a subset of the kernelizer, we show in step 3 that it generates an algebra of full rank on the complement space. Therefore control on this subset leads to full operator controllability on the complement space.
Second, we argue that the absence of non-trivial invariant subspaces of the Lie algebra generated by the kernelizer implies controllability. Finally, we present the proof that there can be no other invariant subspaces of the Lie algebra than the space complement to the target state and the target state itself.

First notice that by construction $\ket{\psi_{\text{AKLT}}}$ is orthogonal to the $J_i = 2$ sectors. Its reduced state on the cooled link, $\rho_{\text{AKLT}}^{\left(i,i+1\right)}$, only has support in the $J_i=0,1$ subspaces and may in general have $3+1$ non-zero eigenvalues. Moreover, $H_{\text{AKLT}}$ makes no distinction between all $J_i = 1$ states such that the eigenvalues of $\rho_{\text{AKLT}}^{\left(i,i+1\right)}$ are $3+1$-fold degenerate. Labeling them $\lambda_{1}=\lambda_{2}=\lambda_{3}\equiv\lambda$ and $\lambda_{0}$, we may write \[\rho_{\text{AKLT}}^{\left(i,i+1\right)}=\text{diag}\big(\underset{J_{i}=2}{\underbrace{0,0,0,0,0}},\underset{J_{i}=1}{\underbrace{\lambda,\lambda,\lambda}},\underset{J_{i}=0}{\underbrace{1-3\lambda}}\big)\,.\] Evidently, any action on the $J_i = 2$ subspace leaves $\rho_{\text{AKLT}}^{\left(i,i+1\right)}$ invariant. Therefore, the local kernelizer, $K_{\ket{\psi_{\text{AKLT}}}}^{\left(i,i+1\right)} \equiv K_{\ket{\psi_{\text{AKLT}}}}\cap \mathcal{O}^{\left(i,i+1\right)}$ where $\mathcal{O}^{(i,i+1)}$ is the space of operators on the link $(i,i+1)$, cf. Sec.~\ref{subsec:Hchoice}, is $25$-fold dimensional, containing all the generators of unitaries acting on the $J_i = 2$ subspace. In other words $K_{\ket{\psi_{\text{AKLT}}}}^{\left(i,i+1\right)} = \mathfrak{su}\left(5\right)$, the generator of $SU(5)$ acting on the $J_i=2$ subspace.

Since $\mathcal{O}$ is the space of all quasi-local interactions, we may express the (global) kernelizer as 
\begin{eqnarray*}
    K_{\ket{\psi_{\text{AKLT}}}} & =&K_{\ket{\psi_{\text{AKLT}}}}\cap \mathcal{O}\nonumber \\
 & =&K_{\ket{\psi_{\text{AKLT}}}}\cap\left(\mathcal{O}^{\left(1,2\right)}+\mathcal{O}^{\left(2,3\right)}+\ldots+\mathcal{O}^{\left(N,N+1\right)}\right).
\end{eqnarray*}
This implies \[K_{\ket{\psi_{\text{AKLT}}}}\cap \mathcal{O}^{\left(i,i+1\right)}\subseteq K_{\ket{\psi_{\text{AKLT}}}}\,,\] and we have \[\sum_{i=1}^{N}K_{\ket{\psi_{\text{AKLT}}}}^{\left(i,i+1\right)}\subseteq K_{\ket{\psi_{\text{AKLT}}}}\,.\] We proceed by showing that the Lie algebra generated by the sum of the local kernelizers is already sufficient to satisfy Eq.~\eqref{eq:sufficient_condition}. 
For brevity, we write $\mathfrak{L}\left(K_{\ket{\psi_{\oplus}}}\right)=\mathfrak{L}$ and $K_{\ket{\psi_{\oplus}}}=K$ in the reminder of the proof and argue that, if the Lie algebra $\mathfrak{L}$ acting on a vector space $V$ only possesses a single invariant subspace, this subspace is $V$ itself. Moreover in this case the dimension of $\mathfrak{L}$ is maximal, $\dim\mathfrak{L}=\dim V^{2}$. If it was smaller, then $\mathfrak{L}$ would not generate all unitaries on $V$ and there would be a subspace invariant under $\mathfrak{L}$. It is thus sufficient to prove that $\mathfrak{L}$ possesses exactly two invariant subspaces, $V\oplus\ket{\psi_{\text{AKLT}}}=\mathcal{H}$.
The invariance of the latter subspace follows by construction. In other words, it is sufficient to show that there exists no invariant subspace $V$ with dimension smaller than $\dim V=d^{N}-1$.

The strategy of the proof is to construct $V$ via repeated applications of $\mathfrak{L}$ on some initial state, chosen to be the all up state $\ket{v_{0}}\equiv\ket{1,\ldots,1}$ in the local spin-1 basis. For lack of an explicit expression for $\mathfrak{L}$, we will only consider actions of $K\subset\mathfrak{L}$ for the generation of $V$. We shall see that this is already sufficient. We introduce the notation for vector space orbits, $K\left[\mathcal{H}\right]=\text{span }\left\{ A\ket{\psi}:A\in K,\ket{\psi}\in\mathcal{H}\right\} $; it is the image of Hilbert space $\mathcal{H}$ (i.e., a vector space of states) under the vector space of operators $K$. As a first step, we prove the following
\begin{lemma}
There exists some $k\in\mathbb{N}$ such that $V\equiv V^{k}\equiv\mathfrak{L}\left[\ket{v_{0}}\right]$ is the smallest subspace invariant under $\mathfrak{L}$ and contains $\ket{v_{0}}$.
\end{lemma}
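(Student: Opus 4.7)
The plan is to view $V^k$ as the $k$-th term of the iterated orbit sequence built from $K \subset \mathfrak{L}$, show by finite dimensionality that this sequence stabilizes, verify that the stable limit is invariant under the full Lie algebra $\mathfrak{L}$, and finally identify the stable limit with $\mathfrak{L}[\ket{v_0}]$ and with the smallest invariant subspace containing $\ket{v_0}$.

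\textbf{Step 1: construction and stabilization.} Define the nested orbit spaces by $V^0 \equiv \mathrm{span}\{\ket{v_0}\}$ and $V^{k+1} \equiv K[V^k]$. Because $\mathbf{1} \in K_{\ket{\psi_{\text{AKLT}}}}$ by Eq.~\eqref{eq:def_kernelizer}, we have $V^k \subseteq K[V^k] = V^{k+1}$, so the sequence is monotonically non-decreasing. Since $\dim \mathcal{H} = 3^N < \infty$, the chain $V^0 \subseteq V^1 \subseteq \ldots \subseteq \mathcal{H}$ must stabilize: there is a smallest $k \in \mathbb{N}$ with $V^{k+1} = V^k$. Call this stable subspace $V$.

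\textbf{Step 2: $\mathfrak{L}$-invariance of $V$.} By construction $K[V] = V^{k+1} = V$, so $V$ is $K$-invariant. I would then invoke the standard fact that the collection $S_V \equiv \{A \in \mathrm{End}(\mathcal{H}) : A V \subseteq V\}$ of operators preserving a fixed subspace is itself a Lie subalgebra: it is closed under linear combinations trivially, and closed under commutators since $[A,B]v = A(Bv) - B(Av) \in V$ whenever $A,B \in S_V$ and $v \in V$. Since $\mathfrak{L}$ is by definition the smallest Lie subalgebra containing $K$ and $K \subseteq S_V$, we conclude $\mathfrak{L} \subseteq S_V$, i.e., $\mathfrak{L}[V] \subseteq V$.

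\textbf{Step 3: minimality and identification with $\mathfrak{L}[\ket{v_0}]$.} Let $W$ be any subspace with $\ket{v_0} \in W$ and $\mathfrak{L}[W] \subseteq W$; in particular $K[W] \subseteq W$. Then $V^0 \subseteq W$, and by induction $V^{j+1} = K[V^j] \subseteq K[W] \subseteq W$, so $V \subseteq W$. This proves minimality. Applying this to $W = \mathfrak{L}[\ket{v_0}]$, which contains $\ket{v_0}$ (again because $\mathbf{1} \in K \subset \mathfrak{L}$) and is $\mathfrak{L}$-invariant (since $\mathfrak{L}$ is closed under composition-via-commutators-and-linear-combinations with itself on the orbit), gives $V \subseteq \mathfrak{L}[\ket{v_0}]$. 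The reverse inclusion $\mathfrak{L}[\ket{v_0}] \subseteq V$ is immediate from $\ket{v_0} \in V$ and $\mathfrak{L}$-invariance of $V$ established in Step 2. Hence $V = \mathfrak{L}[\ket{v_0}]$.

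\textbf{Anticipated obstacle.} The only nontrivial point is Step 2: promoting $K$-invariance of the stabilized space to invariance under the full Lie algebra $\mathfrak{L}$ generated by $K$, since $\mathfrak{L}$ is obtained through an \emph{a priori} transfinite process of iterated commutators and linear combinations. The clean resolution is the observation that the set of subspace-preserving operators is itself a Lie algebra, so any Lie algebra of operators containing $K$ (in particular $\mathfrak{L}$) is contained in it. All remaining steps are elementary facts about nested sequences of finite-dimensional subspaces and orbits of operator algebras.
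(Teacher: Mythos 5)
Your proof is correct and follows the same overall route as the paper's --- build the nested orbit spaces $V^k$, use $\mathbf{1}\in K$ together with finite dimensionality to get stabilization, then argue minimality --- but your execution is tighter at the two places where the paper's own argument is loose. Your Step 2 is the genuinely distinct ingredient: the paper never explains why the stabilized, $K$-invariant orbit space is automatically invariant under the full Lie algebra $\mathfrak{L}(K)$, which is generated by an a priori unbounded process of commutators; your observation that the stabilizer $S_V=\{A:\,AV\subseteq V\}$ is an associative (hence Lie) subalgebra containing $K$, and therefore contains all of $\mathfrak{L}$, closes that gap cleanly. Likewise, your minimality argument (any $\mathfrak{L}$-invariant $W$ containing $\ket{v_0}$ is in particular $K$-invariant, so contains every $V^j$ by induction) is the standard correct one; the paper instead argues via ``connectedness'' of states to $\ket{v_0}$ through products $A_k\cdots A_1$, which as written does not exclude proper invariant subspaces of $V$ that fail to contain $\ket{v_0}$ and implicitly replaces the Lie algebra by its associative envelope. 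What your approach buys is that the lemma then holds verbatim for any generating set $K$ containing the identity, independent of the AKLT specifics.

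One caveat on your Step 3: the parenthetical claim that $W=\mathfrak{L}[\ket{v_0}]$, read as the single-application span $\mathrm{span}\{A\ket{v_0}:A\in\mathfrak{L}\}$, is $\mathfrak{L}$-invariant is false in general --- a Lie algebra is not closed under operator products, so $BA\ket{v_0}$ need not lie in that span (the growing Krylov subspaces of a single generic Hermitian $A$ already give a counterexample). The identification $V=\mathfrak{L}[\ket{v_0}]$ only holds if, as the paper's surrounding text indicates, $\mathfrak{L}[\ket{v_0}]$ denotes the iterated orbit (equivalently the orbit under the associative algebra generated by $\mathfrak{L}$), in which case the equality is definitional and your Steps 1--2 already establish everything that is needed. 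This is a notational ambiguity inherited from the lemma statement rather than a substantive gap in your argument.
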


\begin{proof}
First, by definition $\mathbf{1}\in\mathfrak{L}$ such that $\ket{v_{0}}\in V$. For the same reason we have $V^{k}\subseteq V^{k+1}$. Therefore there exists some $k\in\mathbb{N}$ for which the series converges and $V^{k+1}=V^{k}\equiv V$. Next we show that indeed there is no smaller subspace that is invariant under $\mathfrak{L}$ and contains $\ket{v_{0}}$. By definition of $V^{k}$, there exists a series of $A_{1},\ldots,A_{k}\in\mathfrak{L}$ such that any $\ket{v}\in V^{k}$ is connected to $\ket{v_{0}}$ via $A_{k}\ldots A_{1}\ket{v_{0}}=\ket{v_{1}}$. Therefore $V$ does not contain any subspace that would be invariant under $\mathfrak{L}$. Now assume the existence of smaller subspace $W$ invariant under $\mathfrak{L}$ that contains $\ket{v_{0}}$. Since we are looking for the smallest such subspace, $W$ cannot contain another subspace invariant under $\mathfrak{L}$. This implies that all $\ket{w}\in W$ are connected to $\ket{v_{0}}$ by some $A_{q}\ldots A_{1}\ket{w}=\ket{v_{0}}$. By assumption $V^{k}$ is converged, so that any such state $\ket{w}$ is also in $V$. Therefore $W\subseteq V$ and because neither contains additional invariant subspaces we have $W=V$.
\end{proof}
To set the stage for constructing $V$, we introduce the shorthand notation $K=\sum_{i=1}^{N}K^{i}$ with $K^{i}\equiv K_{\ket{\psi_{\text{AKLT}}}}^{\left(i,i+1\right)}$ and note that the local Hilbert space on a single link $\left(i,i+1\right)$ is partitioned into a local excited space and a local ground space, 
\begin{align*}
\mathcal{H}^{i}= & \mathcal{H}_{2}^{i}\oplus\mathcal{H}_{0,1}^{i}\,,
\end{align*}
where $\mathcal{H}_{2/\left(1,0\right)}^{i}\equiv\mathcal{H}_{2/\left(0,1\right)}^{\left(i,i+1\right)}$ refers to the $J=2$, respectively $J=0,1$ subspace.
The state $\ket{v_{0}}$ is excited with respect to this partition on all links of the chain and can be expressed locally as $\ket{J=2,m_{j}=2}$ on any link. To carry out the proof, we will first show controllability for even chain lengths and then use the result to prove controllability for odd chain lengths.

For even chain lengths, we can cover the whole chain by the set of adjacent non-overlapping even (odd) links,  denoting a link as even (odd) when the index of the left site is even. In the following, we focus on a cover of the chain by the set of odd links $\left(1,2\right),\left(3,4\right),\ldots,\left(N-1,N\right)$. The spin operators $J_{i}$ for even/ odd $i$ are then mutually commuting such that any state can be labeled by its $J,m_{j}$ quantum numbers, and we can partition the total Hilbert space, 
\begin{align*}
\mathcal{H} &=\bigotimes_{i=1,\text{ odd}}^{N}\left(\mathcal{H}_{2}^{i}\oplus\mathcal{H}_{0,1}^{i}\right)\\
 &=\sum_{k=0}^{N}\begin{pmatrix}N\\k\end{pmatrix}\underset{N-k}{\underbrace{\mathcal{H}_{2}\otimes\ldots\otimes\mathcal{H}_{2}}}\otimes\underset{k}{\underbrace{\mathcal{H}_{0,1}\otimes\ldots\otimes\mathcal{H}_{0,1}}}\\
 & =\sum_{k=0}^{N}\sum_{\left(p,q\right)\in P_{k}^{N}}\underset{N-k}{\underbrace{\mathcal{H}_{2}^{p\left(1\right)}\otimes\ldots\otimes\mathcal{H}_{2}^{p\left(N-k\right)}}}\\
 & \phantom{\sum_{k=0}^{N}\sum_{\left(p,q\right)\in P_{k}^{N}}}\otimes\underset{k}{\underbrace{\mathcal{H}_{0,1}^{q\left(1\right)}\otimes\ldots\otimes\mathcal{H}_{0,1}^{q\left(k\right)}}}\,,
\end{align*}
where $P_{k}^{N}$ denotes the set of permutations of an $N$-dimensional vector of $k$ entries equal to $0$ and $N-k$ entries equal to $1$. Here $q$, resp. $p$ in $\left(p,q\right)\in P_{k}^{N}$ denote the positions of the entries $1$, resp. $0$ in the permutation. We illustrate this for the example of an $N=4$ chain,
\begin{eqnarray*}
\mathcal{H}&= & \left(\mathcal{H}_{2}^{1}\oplus\mathcal{H}_{0,1}^{1}\right)\otimes\left(\mathcal{H}_{2}^{2}\oplus\mathcal{H}_{0,1}^{2}\right)\\
  &= & 
      \left(\mathcal{H}_{2}^{1}\otimes\mathcal{H}_{2}^{2}\right) \oplus\left(\mathcal{H}_{2}^{1}\otimes\mathcal{H}_{0,1}^{2}\right)\oplus\left(\mathcal{H}_{0,1}^{1}\otimes\mathcal{H}_{2}^{2}\right)\\
 & &\oplus\left(\mathcal{H}_{0,1}^{1}\otimes\mathcal{H}_{0,1}^{2}\right)\,.
\end{eqnarray*}
We will show that each such Hilbert space summand except $\mathcal{H}_{0}^{1}\otimes\ldots\otimes\mathcal{H}_{0}^{N-1}$ is connected to $\ket{v_{0}}$ by repeated actions $K$.

As a first step, consider the action of $K^{i}$ on $\ket{v_{0}}$. Note that $\ket{v_{0}}\in\mathcal{H}_{2}^{\text{odd}}\equiv\bigotimes_{i=1,\text{ odd}}^{N}\mathcal{H}_{2}^{i}$ and  $K^{i}$ contains any (Hermitian) operator on $\mathcal{H}_{2}^{i}$. Therefore the vector space orbit $K^{i}\ket{v_{0}}=\text{ span}\left\{ A\ket{v_{0}}:A\in K^{i}\right\} $ contains any state of the form $\ket{2,m_{j}}_{i}\bigotimes_{j=1j\neq i,j\text{ odd}}^{N}\ket{2,2}$ with $-2\leq m_{j}\leq2$, i.e., the whole $\mathcal{H}_{2}^{i}$. Since for every $i$ the vector space orbit $K^{i}\ket{v_{0}}$ is connected to $\ket{v_{0}}$ by $K$, it follows that they all belong to $V$ such that $\mathcal{H}_{2}^{\text{odd}}\subset V$.

Next, we show that we can generate a Hilbert space summand $\mathcal{H}_{0,1}^{i}$ at an arbitrary position $i$ from $\mathcal{H}_{2}^{\text{odd}}$. We consider the local kernelizer $K^{i+1}$ acting on the \textbf{even} link on $i+1$. Specifically we consider the image of $\mathcal{H}_{2}^{\text{odd}}$ under $K^{i+1}$. Since links $i$ and $i+1$, resp. $i$ and $i-1$, are overlapping, the action of $K^{i+1}$ is non-trivial only on $\mathcal{H}_{2}^{i}\otimes\mathcal{H}_{2}^{i+2}$ and it is sufficient to consider the vector space orbit $K^{i+1}\left[\mathcal{H}_{2}^{i}\otimes\mathcal{H}_{2}^{i+2}\right]$. One can show (numerically) that it has a finite overlap with $\mathcal{H}_{2}^{i}\otimes\mathcal{H}_{0,1}^{i+2}$ and  $\mathcal{H}_{0,1}^{i}\otimes\mathcal{H}_{2}^{i+2}$. More specifically, $K^{i+1}\left[\mathcal{H}_{2}^{i}\otimes\mathcal{H}_{2}^{i+2}\right]$ contains states of the form $\left\{ \underset{\in\mathcal{H}_{2}}{\ket{\phi_{j}}}\otimes\underset{\in\mathcal{H}_{0,1}}{\ket{\psi_{j}^{J=0,1}}}\right\} _{j=1,\ldots,4}$ where the four states  $\ket{\psi_{j}^{J=0,1}}$ form a basis of $\mathcal{H}_{0,1}$ and $\ket{\phi_{j}}$ are some states in $\mathcal{H}_{2}$. Similarly, it also contains $\left\{ \underset{\in\mathcal{H}_{0,1}}{\ket{\psi_{j}^{J=0,1}}}\otimes\underset{\in\mathcal{H}_{2}}{\ket{\phi_{j}}}\right\} _{j=1,\ldots,4}$. By another application of $K^{i}$, resp. $K^{i+2}$, acting only on $\ket{\phi_{j}}\in\mathcal{H}_{2}^{i}$, resp. $\mathcal{H}_{2}^{i+2}$, we generate both, $\mathcal{H}_{2}^{i}\otimes\mathcal{H}_{0}^{i+2}$ and $\mathcal{H}_{0}^{i}\otimes\mathcal{H}_{2}^{i+2}$. This works because $K^{i}$ contains any (Hermitian) operator on $\mathcal{H}_{2}$ such that $K^{i}\left[\ket{\phi_{j}}\otimes\ket{\psi_{j}^{J=0,1}}\right]=\mathcal{H}_{2}^{i}\otimes\ket{\psi_{j}^{J=0,1}}$ and likewise $K^{i+2}\left[\ket{\psi_{j}^{J=0,1}}\otimes\ket{\phi_{j}}\right]=\ket{\psi_{j}^{J=0,1}}\otimes\mathcal{H}_{2}^{i}$. 

This mechanism works in general and allows us to generate any Hilbert space summands with arbitrary number of $\mathcal{H}_{0,1}$ on arbitrary links by alternating application of the kernelizers on even and odd links. The only Hilbert space summand not entirely accessible in this way is $\mathcal{H}_{0,1}^{\text{odd}}\equiv\bigotimes_{i=1,\text{ odd}}^{N}\mathcal{H}_{0,1}^{i}$ and we conclude that $\mathcal{H}\setminus\mathcal{H}_{0,1}^{\text{odd}}\subset V$.

In order to connect all excited states in $\mathcal{H}_{0,1}^{\text{odd}}$ to $\ket{v_{0}}$, we now cover the whole chain by adjacent \textbf{even} links $\left(2,3\right),\left(3,4\right),\ldots,\left(N,1\right)$. First note that $\ket{v_{0}}\in\mathcal{H}_{2}^{\text{even}}$. Thus we can repeat the above procedure and immediately find that $\mathcal{H}\setminus\mathcal{H}_{0,1}^{\text{even}}\subset V$. In order to show that $V$ contains all states except $\ket{\psi_{\text{AKLT}}}$, we consider the orthogonal complement $V^{\perp}$. Note that $\left(\mathcal{H}\setminus\mathcal{H}_{0,1}^{\text{even}}\right)+\left(\mathcal{H}\setminus\mathcal{H}_{0,1}^{\text{even}}\right)\subseteq V$, therefore $V^{\perp}\subseteq\left(\left(\mathcal{H}\setminus\mathcal{H}_{0,1}^{\text{even}}\right)+\left(\mathcal{H}\setminus\mathcal{H}_{0,1}^{\text{even}}\right)\right)^{\perp}=\left(\mathcal{H}\setminus\mathcal{H}_{0,1}^{\text{even}}\right)^{\perp}\cap\left(\mathcal{H}\setminus\mathcal{H}_{0,1}^{\text{even}}\right)^{\perp}=\mathcal{H}_{0,1}^{\text{odd}}\cap\mathcal{H}_{0,1}^{\text{even}}$. The only state not in $V$ is simultaneously in the ground space of all even and all odd links. The only state in this set is the unique ground state $\ket{\psi_{\text{AKLT}}}$. This concludes the proof of controllability for even chain lengths.

For odd chain lengths, covering the the whole chain  by adjacent even/odd links leaves a single site uncovered, e.g. for $5$ sites one would cover $\left(1,2\right),\left(3,4\right)$ or $\left(2,3\right)\left(4,5\right)$. In other words, the chain essentially becomes a chain with open boundary conditions. We can still apply the procedure above but without the closing link $\left(N,1\right)$ there are now four states in $V^{\perp}$: the four degenerate AKLT ground states of a chain with open boundary conditions. To formalize this $V^{\perp}\subseteq\mathcal{H}_{0,1}^{\text{odd}}\cap\mathcal{H}_{0,1}^{\text{even}}$ is not cut with$\mathcal{H}_{0,1}^{N}$ and therefore the states in $V^{\perp}$ are not constraint to $\mathcal{H}_{0,1}^{N}$ resulting in the four degenerate AKLT states. We can partition the four states in $V^{\perp}$ into $3+1$ states: one state without excitations on link $N$ corresponding to the unique AKLT ground state of a chain with periodic boundary conditions. The other three are excited on link $N$ and are therefore subject to $K^{N}$ which allows us to connect them to $V$. This concludes the proof of controllability for odd chain lengths.

\nocite{*}

\bibliography{manuscript}

\end{document}